\newtheorem{theorem}{Theorem}[section]
\newtheorem{lemma}[theorem]{Lemma}
\newtheorem{proposition}[theorem]{Proposition}
\theoremstyle{definition}
\newtheorem{definition}[theorem]{Definition}
\newtheorem{example}[theorem]{Example}
\theoremstyle{remark}
\newtheorem{remark}[theorem]{Remark}
\numberwithin{equation}{section}
\newcommand{\beq}{\begin{equation}}
\newcommand{\eeq}{\end{equation}}
\newcommand{\ZZ}{\mathbb{Z}}
\def\ra{\rightarrow}
\newcommand {\be}{\begin{equation}}
\newcommand {\ee}{\end{equation}}
\newcommand{\h}{\begin{eqnarray*}}
\newcommand{\e}{\end{eqnarray*}}
\newcommand{\leqnomode}{\tagsleft@true}
\newcommand{\reqnomode}{\tagsleft@false}
\begin{document}

\title
{Exotic Courant algebroids and T-duality\\}

 \author{Jaklyn Crilly}
\address{Mathematical Sciences Institute,
Australian National University, Canberra 0200, Australia}
\email{jaklyn.crilly@anu.edu.au}

 \author{Varghese Mathai}
\address{School of Mathematical Sciences,
University of Adelaide, Adelaide 5005, Australia}
\email{mathai.varghese@adelaide.edu.au}

\subjclass[2010]{Primary 53D18, Secondary 81T30}
\keywords{}
\date{}

\maketitle

\begin{abstract}
In this paper, we extend the T-duality isomorphism in
\cite{cavalcanti} from invariant Courant algebroids, to exotic Courant algebroids such that the {\em momentum} and {\em winding numbers} are exchanged,
filling in a gap in the literature. 
\end{abstract}

\section*{Introduction}
\reqnomode

In \cite{cavalcanti}, Gualtieri and Cavalcanti showed that T-duality for principal circle bundles in a background $H$-flux gives an isomorphism between invariant Courant algebroids. The goal in this paper is to extend their isomorphism to one mapping from the full standard Courant algebroid defined on
a principal circle bundle. The striking answer obtained is that the T-dual is an {\em exotic} Courant algebroid (defined in the paper) on the T-dual 
principal circle bundle, and we conclude that the {\em momentum} and {\em winding numbers} are exchanged. This paper was inspired by the works of \cite{HM14,HM18}, who extended the isomorphism of \cite{BEM,BEM2} to the full space of differential forms on a principal circle bundle, with the T-dual then being the space of {\em exotic} differential forms on the T-dual principal circle bundle.

In more detail, recall that topological T-duality as in \cite{BEM,BEM2} asserts that if $\pi : Z \to M $ denotes a principal circle bundle whose first Chern class is given by $[F] \in H^2(M, \mathbb{Z})$, and $[H] \in H^3(Z, \ZZ)$ denotes a $H$-flux on $Z$, then there exists a T-dual bundle $\hat{\pi} : \hat{Z} \to M$ whose first Chern class is denoted $[\hat{F}] \in H^2(M, \mathbb{Z})$ along with a T-dual $H$-flux on the T-dual bundle, $[\hat{H}] \in H^3(\hat{Z}, \ZZ)$, satisfying
$$
[\hat{F}] = \pi_*([H]),\qquad
[F] = \hat{\pi}_*([\hat{H}]).
$$
The result of \cite{cavalcanti} asserts that there is an isomorphism of Courant algebroids over $M$,
\begin{align}
\label{CG}
\mu_0: \Gamma(Z, TZ \oplus T^*Z)^{S^1} \longrightarrow \Gamma(\hat Z, T\hat Z \oplus T^*\hat Z)^{\hat S^1}.
\end{align}
We extend this isomorphism which is associated to the invariant data, to an isomorphism taking into account the non-invariant data too, that is, a T-dual isomorphism mapping from the standard Courant algebroid $TZ \oplus T^*Z$ defined over $Z$. We achieve this by first showing that the standard Courant algebroid over the principal circle bundle $Z$ is equivalent to an {\it exotic} Courant algebroid defined over the base space $M$, and then proving that there is a graded T-duality isomorphism of exotic Courant algebroids defined over $M$, which when restricted to the invariant sections of $TZ \oplus T^*Z$ over $Z$, is exactly the T-duality isomorphism (\ref{CG}). It is thus shown in Theorem \ref{T-duality-exotic} that this isomorphism, given by
\begin{align}
\label{muuu}
\mu: \Gamma(Z, TZ \oplus T^*Z) \longrightarrow \bigoplus_{n\in\ZZ}\Gamma \big( \hat{Z} , (T\hat Z \oplus T^*\hat Z) \otimes {\hat\pi}^*(L^{\otimes n})\big)^{\hat S^1}
\end{align}
for the line bundle $L$ associated to $Z$, defines an isomorphism between each of the spaces underlying the exotic Courant algebroid structures (to be defined in section 2.2). 

%
%
\noindent\textbf{Acknowledgements}.   The authors thank the participants (especially Maxim Zabzine) of the conference entitled, {\em Index theory and related fields}, Chern Institute of Mathematics, Tianjin
June 17-21, 2019, for feedback on first author's talk over there. Mathai thanks Fei Han for earlier collaboration that inspired this work and both authors thank David Baraglia for helpful initial discussions.
Jaklyn Crilly acknowledges her Laureate PhD scholarship support through FL170100020 and Varghese Mathai thanks the Australian Research Council  for support via the Australian Laureate Fellowship FL170100020.
\section{Preliminaries}\label{sect:courant}
In this section we recall the notions of a Courant algebroid, along with some standard, but important, examples. For some excellent articles and reviews on Courant algebras in the literature see \cite{Baraglia,Baraglia2,Bouwknegt07,Courant,gualtieri, Hekmati,Hitchin, Hitchin2,Li-Bland,KU,Zabzine}. 
We then review topological T-duality as in \cite{BEM} and see how the T-duality transformations can be applied to certain Courant algebroid as in \cite{cavalcanti}, before reviewing the extended T-duality isomorphism for exotic differential forms \cite{HM18}.

\subsection{Courant algebroids}
Given a manifold $M$, let the data $(E, \langle  \,, \rangle, [ \,, ], \rho)$ consist of a (possibly infinite dimensional) vector bundle $E \ra M$, a nondegenerate, symmetric bilinear form $\langle \cdot ,\cdot \rangle: \Gamma(E) \times  \Gamma(E) \ra C^{\infty}(M)$, a bilinear ({\it Dorfman}) bracket $[\cdot ,\cdot]: \Gamma(E) \times  \Gamma(E) \ra \Gamma(E)$, and a smooth bundle map $\rho: E \ra TM$ which we term the {\it anchor map}.

\begin{definition}
A Courant algebroid over $M$ consists of the data $(E, \langle  \, ,\,  \rangle, [\, ,\,  ], \rho)$ defined over $M$, which is compatible with the following conditions:
\begin{enumerate} 
\item 
$[a, [b, c]] = [[a, b ], c] + [b, [a, c]]$,
\item 
$\rho([a, b]) = [\rho(a) , \rho(b)],$
\item
$[a, h b] = \rho(a)(h) b +h [a, b]$,
\item
$[a, b] + [b, a] =d \langle a , b \rangle$,
\item
$\rho(a) \langle b , c \rangle = \langle [a, b] , c \rangle + \langle b , [a, c] \rangle$.
\end{enumerate}
where $a, b, c \in \Gamma(E)$, $h \in C^{\infty}(M)$, and $d: C^{\infty}(M) \ra \Gamma(E)$ is the induced differential operator defined by the relation: $\langle dh, a\rangle = \rho (a) h.$
\end{definition}

\begin{remark}
In keeping with the literature, we have included conditions (2) and (3). It can be shown however that they are a consequence of the non-degeneracy of the bracket and condition (5), and are thus redundant.
\end{remark}

\begin{example}
\label{stand}
The {\it standard H-twisted Courant algebroid} consists of the vector bundle $TM \oplus T^*M \to M$ given by the direct sum of the tangent and cotangent bundles over $M$, along with the following defining structures\footnote{The Dorfman bracket can alternatively be defined as the derived bracket of the $H$-twisted differential $d+H$ acting on $\Omega^*(M)$ (that is, $ [ \cdot , \cdot ]_H := [[d+H, \, \cdot \, \,], \, \cdot \, \, ]$), where the sections of $TZ\oplus T^*Z$ act on $\Omega^*(M)$ via the Clifford action given by:
\label{cliffordmod} $(X+\xi)\cdot \omega = \iota_X (\omega) + \xi \wedge \omega.$}:
\begin{align*}
\bullet \, \, \, & [X + \alpha, Y + \beta]_H := [X,Y] + \mathcal{L}_X\beta - \iota_Y d \xi + \iota_X \iota_Y H,\\
\bullet \, \, \, & \langle X + \alpha, Y + \beta \rangle := \frac{1}{2}(\alpha(Y) + \beta(X)),\\
\bullet \, \, \, & \rho(X+\alpha) := X,
\end{align*}
where $X, Y \in \Gamma(TM)$, $\alpha, \beta \in \Omega^1(M)$, $H \in H^3(M)$, $\mathcal{L}_X$ denotes the Lie derivative along $X$, and $\iota_Y H$ denotes the interior product of $H$ with $X$.
\end{example}

The next example will be that of an {\it invariant Courant algebroid}. This is as a non-exact Courant algebroid defined over a principal circle bundle and will be of considerable interest when exploring T-duality. Henceforth, we will let $\pi : Z \to M$ denote a principal circle bundle, $v$ denote an invariant period-1 generator of the circle action on $Z$, fix an invariant connection form $A \in \Omega^1(Z)$ normalized such that $\iota_v A := A(v)=1$, and let
$$\Gamma(TZ \oplus T^*Z)^{S^1} := \{x+\alpha \in \Gamma(Z,TZ \oplus T^*Z) | \mathcal{L}_v(X + \alpha)=0 \},$$
where $\mathcal{L}_v$ denotes the Lie derivative along $v$, $x \in \Gamma(Z,TZ)$, and $\alpha \in \Gamma(Z,T^*Z)$, denote the invariant sections of the bundle $TZ \oplus T^*Z$ over $Z$.

Now for any invariant section $x + \alpha \in \Gamma(TZ \oplus T^*Z)^{S^1}$, there exists a unique $X \in \Gamma(M, TM)$, $\xi \in \Omega^1(M)$, and $f, g \in C^{\infty}(M)$ such that 
\begin{align}
x + \alpha = h_A(X)+fv +\pi^*(\xi) + gA  \in \Gamma(TZ \oplus T^*Z)^{S^1},
\end{align}
where $h_A$ denotes the horizontal lift with respect to the connection $A$, and $\pi^*$ the pullback with respect to $\pi : Z \to M$.

Thus, there is a bijective correspondence between such invariant sections over $Z$ and arbitrary sections of the bundle $E:=TM \oplus \mathbbm{1}_{\mathbb{R}} \oplus T^*M \oplus \mathbbm{1}_{\mathbb{R}}$ over $M$, given by:\footnote{In keeping with the literature, instead of writing $h_A(X)$ we will simply write $X$. Thus $X$ will mean both the vector field on $TM$ as well as its horizontal lift $h_A(X)$. Likewise for the basic forms.}\footnote{Given the bijection $\phi$, we will often write a section over $Z$ as $X+fv +\xi + gA$, when what is really meant is $(X, f, \xi, g)$. This emphasizes that the data of interest is related to the bundle $Z$.}
\begin{align}
\label{invsec}
\phi \, : \, \, \Gamma(M, E)\, \, &\mapsto \, \Gamma(Z, TZ \oplus T^*Z)^{S^1} \\
(X, f, \xi, g) &\mapsto X+fv +\xi + gA. \nonumber
\end{align}

\begin{example}[Invariant Courant algebroid]
\label{ICA1}
Consider the $H$-twisted standard Courant algebroid over the principal circle bundle $Z$ where we will take $H$ to be an invariant 3-form on $M$, given by $H = H_{(3)} + A \wedge H_{(2)}$. We want to restrict all the structures so that they are only defined on the invariant sections, i.e. 
\begin{align*}
[ \, \, , \,]_H &: \Gamma(TZ \oplus T^*Z)^{S^1} \times \Gamma(TZ \oplus T^*Z)^{S^1} \to \Gamma(TZ \oplus T^*Z)^{S^1}\\
\langle \, \, , \, \rangle \, \, \, &:  \Gamma(TZ \oplus T^*Z)^{S^1} \times \Gamma(TZ \oplus T^*Z)^{S^1} \to C^{\infty}(M),
\end{align*}
where we have observed that the invariant sections are closed under the Dorfman bracket.

Now using the bijection $\phi$ we can transfer the standard Courant algebroid structures over $Z$ (restricted to the invariant sections) to a Courant algebroid structure over $M$ with vector bundle $E=TM \oplus \mathbbm{1}_{\mathbb{R}} \oplus T^*M \oplus \mathbbm{1}_{\mathbb{R}}$ defined over $M$. Doing so, we get that the anchor map is given by $\bar{\rho}:= \pi_* \circ \rho : E \overset{\rho}{\to} TZ \overset{\pi_*}{\to} TM$, whilst the remaining structures defined on the space of sections $\Gamma(M,E):=\Gamma(M,TM \oplus \mathbbm{1}_{\mathbb{R}} \oplus T^*M \oplus \mathbbm{1}_{\mathbb{R}})$ are:
\begin{align*}
\bullet & \, \, \, \langle (X,f, \alpha, g), (Y ,\tilde{f} , \beta , \tilde{g} ) \rangle := \beta(X) + \alpha(Y) + g\tilde{f} + f\tilde{g}\\
\bullet & \, \, \, [(X,f , \alpha , g ), (Y ,\tilde{f} , \beta , \tilde{g} )]_H := \phi^{-1} \big( [\phi(X,f , \alpha , g), \phi(Y ,\tilde{f} , \beta , \tilde{g} )]_H \big)
\end{align*}

Then $E$ defines a non-exact Courant algebroid over $M$, denoted by $\big((TZ \oplus T^*Z)^{S^1}, [\cdot , \cdot ]_H \big)$, encoding the invariant data of the $H$-twisted, standard Courant algebroid over $Z$.
\end{example}

\subsection{Topological T-duality}
We are going to be interested in topological T-duality arising for the case of principal circle bundles with a H-flux. For such a case, one begins with a principal circle bundle $\pi: Z \to M$ whose first Chern class is given by $[F] \in H^2(M, \mathbb{Z})$, along with a H-flux which is given by some $[H] \in H^3(Z, \mathbb{Z})$. The aim is to then determine the corresponding data arising after an application of the T-duality transformation. 

One method for determining the T-dual data is by focusing on the Gysin sequence associated to the bundle $\pi : Z \to M$, given by:
\begin{center}
\begin{tikzcd}
\cdots \arrow[r] & H^{3}(M, \mathbb{Z}) \arrow[r, "\pi^{*}"] & H^3(Z, \mathbb{Z}) \arrow[r, "\pi_{*}"] & H^{2}(M, \mathbb{Z}) \arrow[r, "\lbrack F \rbrack \wedge"] & H^{4}(M, \mathbb{Z})  \arrow[r] & \cdots
\end{tikzcd}
\end{center}
Then letting $[H] \in H^3(Z, \mathbb{Z})$, define $[\hat{F}] = \pi_*([H]) \in H^2(M, \mathbb{Z})$, choose and fix a principal circle bundle $\hat{\pi} : \hat{Z} \to M$ whose first Chern class is given by $[\hat{F}]$. Having made such a choice, consider the Gysin sequence associated to the bundle $ \hat{Z}$ over $ M$,
\begin{center}
\begin{tikzcd}
\cdots \arrow[r] & H^{3}(M, \mathbb{Z}) \arrow[r, "\hat{\pi}^{*}"] & H^3(\hat{Z}, \mathbb{Z}) \arrow[r, "\hat{\pi}_{*}"] & H^{2}(M, \mathbb{Z}) \arrow[r, "\lbrack \hat{F}\rbrack \wedge"] & H^{4}(M, \mathbb{Z})  \arrow[r] & \cdots
\end{tikzcd}
\end{center}
Now using exactness of the above exact sequence, and the fact that $[F] \wedge [\hat{F}]= [F] \wedge \pi_*([H])=0$, there exists a $[\hat{H}] \in H^3(\hat{Z})$ such that $[F] = \hat{\pi}_*([\hat{H}])$. As shown in \cite{BEM}, the pairs $([F], [H])$ and $([\hat{F}], [\hat{H}])$ are T-dual (maps between them defining the T-duality transformation), and are unique up to bundle automorphism. This provides a global, geometric version of the Buscher rules \cite{Buscher}.

A second useful and relevant T-duality transformation derived from the same paper \cite{BEM} is the following:
%
%

\begin{theorem}[\cite{BEM}]\label{TD2}
Let $A$, $\hat{A}$ denote connection forms on $Z$ and $\hat{Z}$ respectively, choose invariant representatives $H \in [H]$ and $\hat{H} \in [\hat{H}]$, and let $\big(\Omega^{*}(Z)^{S^1}, d + H\big)$ be the $H$-twisted, $\mathbb{Z}_2$-graded differential complex consisting of invariant differential forms over $Z$.\footnote{Here we have $\Omega^{*}(Z)^{S^1} := \{\omega \in \Omega^{*}(Z) |\mathcal{L}_v(\omega) = 0\},$ where $v$ is the invariant period-1 generator of the circle action on $Z$, and $\mathcal{L}_v$ denotes the Lie derivative along $v$.}

Then the following map:
\begin{align*}
T: (\Omega^{*}(Z)^{S^1}, d+H) &\to (\Omega^{*+1}(\hat{Z})^{\hat{S}^1}, -(d+{\hat{H})})\\
\omega \quad &\mapsto \int_{S^1}\omega \wedge e^{ \hat{A} \wedge A},
\end{align*}
is a chain map isomorphism between twisted, $\mathbb{Z}_2$-graded complexes.

Furthermore, this induces an isomorphism on twisted cohomologies:
\begin{align*}
T: H^{*}_{d+H}(Z) &\to H^{*+1}_{d+\hat{H}}(\hat{Z}).
\end{align*}
Such maps define the T-duality transformation between the relevant sets of data.
\end{theorem}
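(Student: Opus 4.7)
The plan is to pass to the correspondence space $P := Z \times_M \hat Z$, a principal $T^2$-bundle over $M$, with projections $p: P \to Z$ and $\hat p: P \to \hat Z$. In these terms the T-duality map reads $T(\omega) = \hat p_*\bigl(p^*\omega \wedge e^{\hat A \wedge A}\bigr)$, where $\hat p_*$ denotes integration along the one-dimensional fiber of $\hat p$. The first step is to fix invariant representatives $H = H_{(3)} + A \wedge \hat F$ and $\hat H = \hat H_{(3)} + \hat A \wedge F$ with the common basic piece $H_{(3)} = \hat H_{(3)}$ on $M$; a short computation using $dA = F$ and $d\hat A = \hat F$ then yields the decisive identity
\[
d(\hat A \wedge A) \;=\; \hat F \wedge A - \hat A \wedge F \;=\; H - \hat H \qquad \text{on } P.
\]
Writing $\eta := \hat A \wedge A$ and using $d(e^\eta) = d\eta \wedge e^\eta$ together with the fact that $e^\eta$ has even total degree, one then verifies the intertwining
\[
(d + \hat H)(e^\eta \wedge \alpha) \;=\; e^\eta \wedge (d + H)\alpha
\]
for every form $\alpha$ on $P$.

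With this in hand, for the chain map property I would compute
\[
T((d+H)\omega) \;=\; \hat p_*\bigl((d+\hat H)(p^*\omega \wedge e^\eta)\bigr) \;=\; \hat p_*\bigl(d(p^*\omega \wedge e^\eta)\bigr) + \hat p_*\bigl(\hat H \wedge p^*\omega \wedge e^\eta\bigr),
\]
invoking the intertwining relation (after commuting the even form $e^\eta$ past $(d+H)p^*\omega$). The target sign comes from two places: since $\hat p$ has one-dimensional fiber, Stokes along the fiber gives $\hat p_* \circ d = -\,d \circ \hat p_*$ on invariant forms; and the projection formula $\hat p_*(\hat p^*\alpha \wedge \beta) = (-1)^{|\alpha|}\alpha \wedge \hat p_*\beta$ for a one-dimensional fiber, specialised to $\alpha = \hat H$ of degree $3$, contributes a further minus. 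Combining these yields $T((d+H)\omega) = -(d+\hat H) T\omega$, which is exactly the chain map condition into the target complex equipped with differential $-(d+\hat H)$.

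For the isomorphism I would decompose each invariant form uniquely as $\omega = \omega_0 + A \wedge \omega_1$ with $\omega_0,\omega_1$ basic (i.e.\ pulled back from $M$). Since $\eta^2 = \hat A \wedge A \wedge \hat A \wedge A = 0$, only the first two terms of $e^\eta$ contribute, and fiber integration selects the $A$-component, yielding an explicit formula of the form $T\omega = \hat\pi^* \tilde\omega_1 \,\pm\, \hat A \wedge \hat\pi^*\tilde\omega_0$. This makes the bijection manifest: the horizontal and vertical components of $\omega$ on $Z$ are exchanged with the vertical and horizontal components on $\hat Z$. A two-sided inverse is obtained by running the same construction with the roles of $Z$ and $\hat Z$ swapped, absorbing the sign coming from $A \wedge \hat A = -\hat A \wedge A$. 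The induced isomorphism on twisted cohomology is then immediate from the chain isomorphism.

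The main obstacle is sign bookkeeping, specifically the interplay between (i) the minus from one-dimensional fiber integration anticommuting with $d$, (ii) the projection-formula sign $(-1)^{|\alpha|\cdot n}$ when a basic form of degree $|\alpha|$ passes through a fiber of dimension $n$, and (iii) the sign adjustment needed to produce a genuine two-sided inverse under the $Z \leftrightarrow \hat Z$ symmetry. Once the invariant representatives are chosen so that $H_{(3)} = \hat H_{(3)}$, the algebraic intertwining on the correspondence space is essentially forced, and the remainder is careful tracking of these signs.
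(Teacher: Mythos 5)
Your proposal is correct, but note that the paper itself gives no proof of this statement: it is imported verbatim from \cite{BEM}, and your argument on the correspondence space $Z\times_M\hat Z$ --- normalizing $H=H_{(3)}+A\wedge\hat F$, $\hat H=H_{(3)}+\hat A\wedge F$ so that $d(\hat A\wedge A)=H-\hat H$, deducing the intertwining of $d+H$ with $-(d+\hat H)$, and exhibiting the bijection via the decomposition $\omega=\omega_1+A\wedge\omega_0$ into basic components --- is precisely the argument of the cited source. The sign bookkeeping you flag (the $(-1)^n$ from fiber integration anticommuting with $d$ and the projection-formula sign) is handled consistently, so nothing further is needed.
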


\subsection{Exotic differential forms}
In \cite{HM18}, the T-duality isomorphism mapping between the invariant differential forms defined on the given T-dual bundles as given above in Theorem \ref{TD2} was extended to an isomorphism mapping from the full space of complex-valued differential forms. 
In order to define this T-duality mapping, let $L, \hat L$ denote the complex line bundles associated to the circle bundles $Z, \hat Z$ with the standard representation of the circle on the complex plane respectively. The {\em exotic differential forms} (on $Z$ and $\hat{Z}$ respectively) are then given by
$$\mathcal{A}^{\bar k}(Z)^{S^1}=\bigoplus_{n\in \ZZ}\mathcal{A}^{\bar k}_n(Z)^{S^1}:=\bigoplus_{n\in \ZZ}\Omega^{\bar k}(Z, \pi^*(\hat{L}^{\otimes n}))^{S^1},$$
$$\mathcal{A}^{\bar k}(\hat{Z})^{\hat S^1}=\bigoplus_{n\in \ZZ}\mathcal{A}^{\bar k}_n(\hat{Z})^{\hat S^1}:=\bigoplus_{n\in \ZZ}\Omega^{\bar k}(\hat{Z}, \hat{\pi}^*(L^{\otimes n}))^{\hat S^1}$$
for $\bar k= k \mod 2$, and where we have taken the direct sum above to be the Fr\'{e}chet space completion of the standard direct sum. This definition of the direct sum (as a completion) will be the implicit definition from here on out when using direct sums in the context of the exotic structures. This is notationally consistent with \cite{HM18}.

{ Now define the subspace of $n$th-weight differential forms on $Z$ to be given by,
\begin{align}
 \Omega^*_{n}(Z):=\{\omega \in \Omega^*(Z)| \mathcal{L}_{v}\omega=n\omega\},
\end{align}
and observe that
$$\Omega^{\bar k}_0(Z)=\Omega^{\bar k}(Z)^{S^1}, \ \  \mathcal{A}^{\bar k+1}_0(\hat{Z})^{\hat S^1}=\Omega^{\overline{k+1}}(\hat Z)^{\hat S^1}.$$
Then under a particular choice of Riemannian metrics and flux forms, the results of \cite{HM18} show that the Fourier-Mukai transform $T$ in Theorem \ref{TD2} can be extended to a sequence of  isometries
\begin{align}
\tau_n \colon \Omega^{\bar k}_{-n}(Z) \to \mathcal{A}^{\bar k+1}_n(\hat{Z})^{\hat S^1},
\end{align}
defined by the {\em exotic Hori formula} from $Z$ to $\hat Z$ given in \cite{HM18} for $\bar k= k \mod 2$, where the twisted de Rham differential $d+H$ maps to the differential $-(\hat{\pi}^*\nabla^{L^{\otimes n}}-\iota_{n\hat{v}}+\hat{H})$, and we observe that $\tau_0=T$.  { One similarly has 
a sequence of  isometries,
\begin{equation}
\sigma_n \colon\mathcal{A}^{\bar k}_n(Z)^{ S^1} \to \Omega^{\bar k +1}_{-n}(\hat{Z}),
\end{equation}
defined by the {\em inverse exotic Hori formula} form $Z$ to $\hat Z$ given in \cite{HM18} for $\bar k= k \mod 2$, where the differential $\pi^*\nabla^{\hat L^{\otimes n}}-\iota_{n{v}}+{H}$
maps to  the twisted de Rham differential $-(d+{\hat H})$, and $\sigma_0=T$. Similarly, one can define the sequences of isometries $\hat\tau_n, \hat\sigma_n$ on $\hat Z$. Although the extension of the Fourier-Mukai transform 
to all differential forms on $Z$ is slightly asymmetric, one has the following crucial identities, verified in \cite{HM18}:
\begin{align}
{\rm -Id} =\hat\sigma_n \circ \tau_n \colon \Omega^{\bar k}_{-n}(Z)  \longrightarrow \Omega^{\bar k}_{-n}(Z),\\
{\rm -Id} =\hat\tau_n \circ \sigma_n  \colon \mathcal{A}^{\bar k}_n(Z)^{ S^1} \longrightarrow \mathcal{A}^{\bar k}_n(Z)^{ S^1}.  
\end{align}
This is interpreted as saying that T-duality, when applied twice, returns the object to minus itself, which arises due to the convention of integration along the fiber. This was a result previously verified in \cite{BEM,BEM2} for the special case of when $n=0$.

This shows that for each of either $Z$ or $\hat Z$, there are two theories (at degree 0 the two theories coincide), and there are also graded isomorphisms between the two theories of both sides. }

Moreover, when $n\neq 0$ 
the complex $(\mathcal{A}^{\bar k+1}_n(\hat{Z})^{\hat S^1}, \hat{\pi}^*\nabla^{L^{\otimes n}}-\iota_{n\hat{v}}+\hat{H})$ has vanishing cohomology. Therefore, when $n\neq 0$ the complex $(\Omega^{\bar k}_{-n}(Z), d+H)$ also has 
vanishing cohomology. In \cite{HM18},  an explicit homotopy is constructed to show this.


\subsection{T-duality of Courant algebroids}
T-duality has a natural extension to a duality between certain Courant algebroids defined on principal circle bundles. To see this, consider again a principal circle bundle $\pi : Z \to M$ with invariant $H$-flux representative $H \in \Omega^3( Z)^{ S^1}$, and let $\hat \pi: \hat Z \to M$ denote the T-dual principal circle bundle over $M$ with T-dual $H$-flux representative $\hat H \in \Omega^3(\hat Z)^{\hat S^1}$.

\begin{theorem}[\cite{cavalcanti}]\label{t-duality-courant}
\label{cg}
Consider the invariant Courant algebroids defined over $M$ given by $((TZ\oplus T^*Z)^{S^1}, [\cdot, \cdot]_H)$ and $((T\hat Z\oplus T^*\hat Z)^{\hat S^1}, [\cdot, \cdot]_{\hat H})$ (as introduced in Example \ref{ICA1}). Then there exists a Courant algebroid isomorphism between these objects, given by
\begin{align*}
\chi \, \, : \, \, \, \, (TZ\oplus T^*Z)^{S^1} \, \, \, &\to \, \, \, (T\hat Z\oplus T^*\hat Z)^{\hat S^1}\\
(X + f v + \xi + g A) &\to (X + g\hat v + \xi + f \hat A)
\end{align*}
where $X \in \Gamma(M, TM)$, $\xi \in \Omega^1(M)$, and $f, g \in C^{\infty}(M)$.

Furthermore, this map defines an isomorphism between Clifford algebras, and so
$$T(a \cdot \omega)=\chi(a) \cdot T(\omega),$$
where $a \in \Gamma(TZ \oplus T^*Z)^{S^1}$, $\omega \in \Omega^*(Z)^{S^1}$, and the map $T$ is as defined in Theorem \ref{TD2}.
\end{theorem}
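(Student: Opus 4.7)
The plan is to break the statement into four self-contained pieces, handle the two easy structural identities directly, and then reduce the bracket preservation to the Clifford intertwining identity, which itself is a direct check using the Hori formula for $T$.

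I would first verify that $\chi$ preserves the symmetric pairing and intertwines the anchors. For the pairing, substituting $\chi(X,f,\xi,g)=(X,g,\xi,f)$ into the explicit formula of Example \ref{ICA1} leaves the horizontal terms $\eta(X)+\xi(Y)$ unchanged and replaces $g\tilde f+f\tilde g$ by $f\tilde g+g\tilde f$, which is the same quantity. For the anchor, both $\bar\rho$ and $\hat{\bar\rho}$ are projection onto the $TM$-component, and $\chi$ fixes that component, so $\hat{\bar\rho}\circ\chi=\bar\rho$ on the nose.

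Next I would prove the Clifford intertwining $T(a\cdot\omega)=\chi(a)\cdot T(\omega)$ directly. Writing $a=X+fv+\xi+gA$ and decomposing an invariant form as $\omega=\omega_{(1)}+A\wedge\omega_{(2)}$ with $\omega_{(i)}$ basic, the Clifford action is $a\cdot\omega=\iota_{X+fv}\omega+(\xi+gA)\wedge\omega$. Applying $T(\omega)=\int_{S^1}\omega\wedge e^{\hat A\wedge A}$ and using $\iota_vA=1$, $\iota_{\hat v}\hat A=1$, one checks the four terms: $X$ and $\xi$ pass through $T$ essentially unchanged, because they are horizontal/basic and commute with the integration; the crucial exchange is between $\iota_{fv}$ and $gA\wedge$, which swap roles under the exponential factor $e^{\hat A\wedge A}$ (an interior product with $v$ in the integrand pulls out a $\hat A$ from the exponential, while a wedge with $gA$ contributes an $f$-like coefficient after integration along the circle fiber). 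A short term-by-term tally then matches $T(a\cdot\omega)$ with $\chi(a)\cdot T(\omega)=(X+g\hat v+\xi+f\hat A)\cdot T(\omega)$.

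Finally, using the derived-bracket description of the Dorfman bracket recorded in the footnote to Example \ref{stand}, namely $[a,b]_H\cdot=[[d+H,a\,\cdot\,],b\,\cdot\,]$ as graded commutators of operators on $\Omega^*(Z)$, the bracket preservation follows formally. Combining (i) this derived-bracket formula, (ii) the chain-map identity $T\circ(d+H)=-(d+\hat H)\circ T$ from Theorem \ref{TD2}, (iii) the Clifford intertwining just established, and (iv) the fact that the Clifford representation of $TZ\oplus T^*Z$ on $\Omega^*(Z)$ is faithful, one obtains $\chi([a,b]_H)=[\chi(a),\chi(b)]_{\hat H}$ after tracking the signs produced by commuting $T$ with two copies of the differential. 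Together with the pairing and anchor checks this yields the Courant algebroid isomorphism, and the second assertion is exactly the Clifford intertwining proved in the second step. The main obstacle is the Clifford intertwining computation: it requires carefully balancing interior products with $v,\hat v$ against wedges with $A,\hat A$ produced by expanding the exponential, and ensuring that fiber-integration conventions and the invariance hypotheses line up correctly; once that identity is in hand, the remainder of the theorem is a formal consequence.
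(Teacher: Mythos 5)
This theorem is stated in the paper as a recalled result, quoted from Cavalcanti--Gualtieri \cite{cavalcanti}, and the paper itself supplies no proof of it, so there is nothing internal to compare your argument against. That said, your outline is essentially the canonical proof (and the one in \cite{cavalcanti}): the pairing and anchor checks are immediate as you say, the Clifford intertwining $T(a\cdot\omega)=\chi(a)\cdot T(\omega)$ is a finite term-by-term computation after writing $\omega=\omega_{(1)}+A\wedge\omega_{(2)}$ with basic components and expanding $e^{\hat A\wedge A}=1+\hat A\wedge A$, and the bracket preservation then follows from the derived-bracket formula, the chain-map property of $T$, and faithfulness of the spinor representation. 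The one place where your phrase ``after tracking the signs'' conceals a genuine issue is the following: since $T$ conjugates $d+H$ into $-(d+\hat H)$ (Theorem \ref{TD2}), the formal computation gives
$T\bigl([[d+H,a],b]\,\omega\bigr)=[[-(d+\hat H),\chi(a)],\chi(b)]\,T\omega$,
so one lands on the derived bracket of $-(d+\hat H)$ rather than of $d+\hat H$, i.e.\ on $-[\chi(a),\chi(b)]_{\hat H}$ with the naive convention. This is exactly the sign the present paper absorbs later by taking $\mu_0=-\chi$ and remarking that the dual Courant algebroid is the one with derived bracket of $-(d+\hat H)$ (end of the proof of Theorem \ref{T-duality-exotic}). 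You should either adopt that convention explicitly or adjust the orientation/sign in the fiber integration defining $T$ so that the statement holds as written; otherwise your argument proves the identity only up to this sign.
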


Our goal is to generalize this T-duality isomorphism  from the invariant, $H$-twisted Courant algebroid $(TZ\oplus T^*Z)^{S^1}$ over $M$, to an isomorphism from the standard, $H$-twisted Courant algebroid $(TZ\oplus T^*Z)$ over $Z$. We mention that another generalization of Theorem 1.10 to the invariant chiral
de Rham complex was established in \cite{LM15}.

\section{Exotic  Courant algebroids}
In this section, we introduce the novel concept of exotic Courant algebroids (see Definition \ref{exotic courant}), which plays a central role in our formulation of T-duality in the last section. 

\subsection{Exotic Courant algebroids}
The objects we are going to be interested in are infinite dimensional bundles which are constructed from a (complex) Courant algebroid over $M$, denoted $(E, \langle  \,, \rangle, [ \,, ], \rho)$,  and a (complex) line bundle $L\to M$ possessing a connection $\nabla : \Gamma(M,L) \to \Gamma(M,L \otimes T^*M)$. In particular, we will be interested in the bundle
$$\pi :  \underset{n \in \mathbb{Z}}{\bigoplus} \big(E \otimes L^{\otimes n}\big) \to M,$$
on which we will define an {\it exotic Courant algebroid}, i.e., a generalization of the Courant algebroid structure to this infinite dimensional bundle. To do this we need to define an {\it exotic} bracket, product and anchor map, but before this can be done, an additional structure on the infinite dimensional bundle is required in order to make sense of the exotic anchor map. 

For a Courant algebroid, the anchor map is a bundle map $\rho: E \ra TM$, where the tangent bundle $TM$ over $M$, is endowed with a Lie algebroid structure (which by definition defines an action on $C^{\infty}(M)$). Similarly, for the case of the exotic Courant algebroid, we will define our anchor map to be a bundle map $\rho := \underset{n \in \mathbb{Z}}{\oplus} \rho_n$, where
$$\rho_n : E \otimes L^{\otimes n} \to TM \otimes L^{\otimes n},$$
such that the bundle $\underset{n \in \mathbb{Z}}{\bigoplus}\big(TM \otimes L^{\otimes n}\big)\to M$ is endowed with a structure consisting of an action on $\Gamma(L^{\otimes m})$ for all $m \in \mathbb{Z}$, and a bracket. Such a structure is defined as follows:

\begin{itemize}

\item {\underline{Action} :} Let $a:=X \otimes s \in \Gamma\left(M,TM \otimes L^{\otimes n}\right)$ and $h \in \Gamma(M, L^{\otimes p})$ denote sections over $M$. Then we define the action of sections of the bundle $TM \otimes L^{\otimes m} \to M$ on sections of the bundle $L^{\otimes m} \to M$ as follows:
$$a (h) := \iota_X(\nabla^{\otimes p} h) \otimes s.$$\\

\item {\underline{Bracket} :} Define the action of a section $X\otimes s_1 \in \Gamma(M,TM \otimes L^n)$ on the space $\Omega^1(M, L^{\otimes m})$ by
$$(X \otimes s_1) \cdot (\omega \otimes s_2) = (\iota_X \omega) (s_1 \otimes s_2) \in \Gamma\big(M,L^{\otimes (n+m)}\big),$$
where $\omega \otimes s_2 \in \Omega^1(M, L^{\otimes m})$.

The bracket on the infinite dimensional bundle $ \bigoplus_n (TM \otimes L^{\otimes n})$ over $M$ is then:
\begin{align}
\label{brac}
[ \cdot , \cdot ] : \Gamma(TM \otimes L^{\otimes n}) \times \Gamma(TM \otimes L^{\otimes m}) &\to  \Gamma\big(TM \otimes L^{\otimes (m+n)}\big)  \\
( X \otimes s_1 , Y \otimes s_2 ) &\mapsto [ [\mathbbm{1} \otimes \nabla^{\otimes p} + d \otimes \mathbbm{1}, X \otimes s_1], Y \otimes s_2 ],\nonumber
\end{align}
where $X \otimes s_1 \in \Gamma(TM \otimes L^{\otimes n})$, $Y \otimes s_2 \in \Gamma(TM \otimes L^{\otimes m})$, $\mathbbm{1}$ denotes the identity map, the bracket on the right hand side consists of weighted commutators and acts on sections of the bundle $TM \otimes L^{\otimes (m+n)} \to M$, and the value of $p$ changes depending on the sections it acts on, i.e. $p=n$ when acting on $s_1$ and $p=m$ when acting on $s_2$. This can be viewed as a type of derived bracket.
\end{itemize}

\begin{lemma}
The bracket defined in equation (\ref{brac}) satisfies the conditions:
\begin{itemize}
\item
$[a, b] + [b, a]=0,$
\item
$[a, h b] = a( h) b +h [a, b]$,
\end{itemize}
where $a:=X \otimes s \in \Gamma(TM \otimes L^{\otimes n})$, $b \in \Gamma(TM \otimes L^{\otimes m})$, $c \in \Gamma(TM \otimes L^{\otimes p})$, $h \in \Gamma( L^{\otimes p})$, and $n, m, p \in \mathbb{Z}$.
\end{lemma}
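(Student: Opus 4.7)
The plan is to read both identities off the derived-bracket formula
$[a,b] = \bigl[[D,a]_{\mathrm{gr}},\, b\bigr]_{\mathrm{gr}}$, with
$D := \mathbbm{1}\otimes\nabla^{\otimes\bullet} + d\otimes\mathbbm{1}$,
by reinterpreting every symbol as a graded endomorphism of the bigraded module
$\mathcal{E} := \bigoplus_{k \in \ZZ} \Omega^{*}(M, L^{\otimes k})$
with the usual Koszul sign conventions. In this picture $D$ is an odd
derivation of form degree $+1$, while each
$a = X\otimes s_1 \in \Gamma(TM\otimes L^{\otimes n})$ acts on
$\omega\otimes s \in \Omega^{p}(M, L^{\otimes k})$ as
$(X\otimes s_1)(\omega\otimes s) = (\iota_X\omega)\otimes(s_1\otimes s)$,
an odd derivation of degree $-1$. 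A direct check on $h\in\Gamma(L^{\otimes p})$
gives $[D,a]_{\mathrm{gr}}(h) = \iota_X(\nabla^{\otimes p}h)\otimes s_1 = a(h)$,
so the inner commutator $[D,a]_{\mathrm{gr}}$ is the unique even derivation of
$\mathcal{E}$ extending the action of $a$ on $\bigoplus_{p}\Gamma(L^{\otimes p})$
from the bulleted paragraph.

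Antisymmetry will then follow from graded Jacobi. The key observation is that
$a = X\otimes s_1$ and $b = Y\otimes s_2$, being interior-product-type
operators tensored with multiplication in the line bundle, graded-anticommute
on $\mathcal{E}$: $[a,b]_{\mathrm{gr}} = ab + ba = 0$, which reflects
$\iota_X\iota_Y + \iota_Y\iota_X = 0$ together with the symmetry of tensor
multiplication on $\bigoplus_k\Gamma(L^{\otimes k})$. Graded Jacobi for the
three odd elements $D,a,b$, together with
$[b,D]_{\mathrm{gr}} = [D,b]_{\mathrm{gr}}$, then collapses to
\begin{equation*}
\bigl[[D,a]_{\mathrm{gr}},\,b\bigr]_{\mathrm{gr}}
+ \bigl[[D,b]_{\mathrm{gr}},\,a\bigr]_{\mathrm{gr}} = 0,
\end{equation*}
which is precisely $[a,b]+[b,a]=0$. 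For the Leibniz identity I would write
$hb$ as the composition $m_h\circ b$ of $b$ with multiplication by $h$
(legitimate because $b$ is $\Gamma(L)$-linear on the $L^{\otimes\bullet}$
factor); since $[D,a]_{\mathrm{gr}}$ is an even derivation of $\mathcal{E}$,
the graded commutator $\bigl[[D,a]_{\mathrm{gr}},\,m_h\bigr]_{\mathrm{gr}}$ is
multiplication by $[D,a]_{\mathrm{gr}}(h) = a(h)$, and expanding the outer
bracket gives
\begin{equation*}
[a,hb] = \bigl[[D,a]_{\mathrm{gr}},\, m_h\circ b\bigr]_{\mathrm{gr}}
= m_{a(h)}\circ b + m_h\circ \bigl[[D,a]_{\mathrm{gr}},b\bigr]_{\mathrm{gr}}
= a(h)\,b + h\,[a,b].
\end{equation*}

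The main obstacle is not the algebraic structure but the bookkeeping of
Koszul signs and natural identifications in the $L^{\otimes\bullet}$ factor:
one has to fix the canonical isomorphisms
$L^{\otimes n}\otimes L^{\otimes m} \cong L^{\otimes(n+m)}$ coherently so that
$s_1\otimes s_2 = s_2\otimes s_1$ as sections of $L^{\otimes(n+m)}$, which is
what makes $a$ and $b$ anticommute on the nose. It is also worth noting that
$D^{2}$ generally has a nonzero curvature component, but it never enters the
proof because both identities use only first-order nested commutators. Once
these conventions are fixed, both assertions reduce to graded Jacobi and the
derivation property, both of which hold in any Lie superalgebra.
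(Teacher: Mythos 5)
Your argument is correct. Note that the paper states this lemma without proof (it is followed only by a remark on the failure of Jacobi), so there is no in-text argument to compare against; your derived-bracket reading of equation (\ref{brac}) as $[a,b]=\bigl[[D,a]_{\mathrm{gr}},b\bigr]_{\mathrm{gr}}$ on $\bigoplus_k\Omega^*(M,L^{\otimes k})$ is exactly the structure the definition intends, and both identities do follow as you say: antisymmetry from graded Jacobi for the three odd operators $D,a,b$ together with $ab+ba=0$ (which uses both $\iota_X\iota_Y+\iota_Y\iota_X=0$ and the one-dimensionality of the fibres of $L$, so that $s_1\otimes s_2$ and $s_2\otimes s_1$ agree under the canonical identification $L^{\otimes n}\otimes L^{\otimes m}\cong L^{\otimes(n+m)}$), and the Leibniz rule from $\bigl[[D,a]_{\mathrm{gr}},m_h\bigr]=m_{a(h)}$, which you correctly verify using that $D$ is a connection-type derivation and $h$ has form degree zero. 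Your observation that neither identity invokes $D^2=0$ is also the right explanation for why these two properties survive while the Jacobi identity fails in the presence of curvature, consistent with the remark following the lemma.
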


\begin{remark}
The bracket defined in equation (\ref{brac}) does not satisfy the Jacobi identity as a result of the curvature for the connection over $L\to M$ not necessarily being trivial. In the case that the curvature is zero, then the Jacobi identity is indeed satisfied.
\end{remark}

Henceforth, the action and bracket of the bundle $\bigoplus_n (TM \otimes L^{\otimes n})$ over $M$ are taken to be inherent structures of the data $(M, L, \nabla)$.\\

Now we can introduce the concept of an exotic Courant algebroid. But first, it should be noted that throughout this paper, we define:
$$\Gamma \Big(M, \bigoplus_{n \in \mathbb{Z}} E_n \Big) := \underset{n \in \mathbb{Z}}{\bigoplus} \Gamma(M, E_n),$$
for any infinite collection of bundles $\{E_n \}_{n \in \mathbb{Z}}$ over $M$, where the direct sum on the right hand side is taken to be the Fr\'{e}chet space completion of the standard direct sum, (this is the same notation as was used for the exotic differential forms reviewed in section 1.3). This is to be taken as the definition for the infinite direct sum throughout. 
\begin{definition}\label{exotic courant}
Let $M$ be a manifold, $E$ a (complex) Courant algebroid over $M$, $L$ a (complex) line bundle over $M$ with connection $\nabla$, and $\mathcal{L}:=\underset{n \in \mathbb{Z}}{\bigoplus} L^{\otimes n}$.

A (complex) \textit{exotic} Courant algebroid over $M$ is given by an infinite-dimensional vector bundle
$$\mathcal{E} := \underset{n \in \mathbb{Z}}{\bigoplus} (E \otimes L^{\otimes n}) \to M,$$
along with a non-degenerate bilinear map 
$\langle \, \, , \, \rangle : \Gamma(M,\mathcal{E}) \times \Gamma(M,\mathcal{E}) \to \Gamma(M,\mathcal{L}),$ a bilinear bracket 
$[ \, \, , \, ]: \Gamma(M,\mathcal{E}) \times \Gamma(M,\mathcal{E}) \to \Gamma(M,\mathcal{E}),$ a bundle map (called the exotic anchor map) $\rho : \mathcal{E} \to \underset{n \in \mathbb{Z}}{\bigoplus} (TM \otimes L^{\otimes n })$, and an induced differential operator 
$D: \Gamma(M,\mathcal{L}) \to \Gamma(M,\mathcal{E})$
 defined by the relation 
 $$\langle Dh, a\rangle = \rho (a) h,$$
 such that the following properties are satisfied:
\begin{enumerate}
\item 
$[a, [b, c]] = [[a, b ], c] + [b, [a, c]]$,
\item 
$\rho([a, b]) = [\rho(a) , \rho(b)]_*,$
\item
$[a, h b] = \rho(a)(h) b +h[a, b]$,
\item
$[a, b] + [b, a] =D \langle a , b \rangle$,
\item
$\rho(a) \langle b , c \rangle = \langle [a, b] , c \rangle + \langle b , [a, c] \rangle$.
\end{enumerate}
where $a, b, c \in \Gamma(M,\mathcal{E})$, $h \in \Gamma(M,\mathcal{L})$, and $[ \, , \, ]_*$ denotes the bracket given in equation (\ref{brac}).
\end{definition}
\begin{remark}
The operator $D$ can be extended to an operator acting on the sections of the infinite vector bundle by the following:
$$(d \otimes \mathbbm{1} + \mathbbm{1} \otimes D): \Gamma(M,\mathcal{E}) \to \Gamma(M,\mathcal{E}).$$

We will often write this extension operator simply as $D$, and from here on out this is what we will mean by $D$ acting on $\Gamma(M,\mathcal{E})$.
\end{remark}

From here on out, we will call sections of the bundle $E \otimes L^{\otimes n} \to M$, sections of the $n$-th weight space of the bundle associated to the exotic Courant algebroid given in definition \ref{exotic courant}. This is in alignment with the terminology used throughout \cite{HM18}.

\subsection{Exotic Courant algebroid examples}
We are now going to introduce two complex, exotic Courant algebroids defined over a principal circle bundle.
We motivate the first example from the definition of the standard Courant algebroid over a principal circle bundle by showing that it is equivalent to an exotic Courant algebroid through the consideration of Fourier expansions about the circle dimension.\\

Before we can introduce the two examples, we will first fix the background data. Let $\bar{\pi} : TZ \oplus T^*Z \to Z$ denote the standard Courant algebroid for $Z$, where $\pi : Z \to M$ denotes a $S^1$-principal bundle, and let $\{U_{\alpha}\}_{\alpha \in \mathcal{A}}$ denote a good cover of $M$ such that $\pi^{-1}(U_{\alpha}) \cong U_{\alpha} \times S^1$. On the bundle $Z$, take $A \in \Omega^1(Z; \mathbb{R})$ to denote a non-normalized connection form, which we can express locally over $\pi^{-1}(U_{\alpha})$ by:
$$A|_{\alpha} = 2 \pi i d \theta + A^*_{\alpha},$$
where $\theta$ denotes the local circle coordinate and $A^*_{\alpha}$ is a local basic form with respect to the bundle $\pi$.

Finally, take $p :L \to M$ to denote the complex line bundle associated to $Z$, and let $s_{\alpha} \in \Gamma(U_{\alpha}, L|_{U_{\alpha}})$ denote the local, nowhere-zero sections of $L$ corresponding to the constant map $U_{\alpha} \to \{1\} \subset S^1$.

Similarly, for the dual data we take all the same notations as above, but now with a hat, i.e. $\hat{ Z}$, to distinguish it from the original.

\begin{example}
\label{ex1}
Consider the infinite dimensional bundle over $M$ given by
\begin{align*}
\mathcal{E}=\bigoplus_{n \in \mathbb{Z}} E_n &:= \bigoplus_{n \in \mathbb{Z}}  \left( (TM \oplus \mathbbm{1}_{\mathbb{C}} \oplus T^*M \oplus \mathbbm{1}_{\mathbb{C}}) \otimes L^{\otimes n} \right).
\end{align*}
Before we define the exotic structure on this bundle, we will utilize a bijection between the sections of this bundle over $M$ and the invariant sections of a bundle defined over $Z$. This bijection is given by
\begin{align}
\label{psi}
\psi : \Gamma \big(M, (TM \oplus \mathbbm{1}_{\mathbb{C}} \oplus T^*M \oplus \mathbbm{1}_{\mathbb{C}}) \otimes L^{\otimes n} \big) &\to \Gamma \big(Z, (TZ \oplus T^*Z) \otimes \pi^*(L^{\otimes n}) \big)^{ S^1} \, \, \, 
\end{align} 
which can be expressed locally over each weight space by
\begin{align*}
\psi \big( (X_{n, \alpha}, f_{n, \alpha}, \xi_{n, \alpha}, g_{n, \alpha})\otimes s_{\alpha}^{\otimes n} \big) |_{\pi^{-1}(U_{\alpha})} = (X_{n, \alpha} + f_{n, \alpha} v + \xi_{n, \alpha}+ g_{n, \alpha}A)\otimes \pi^*(s_{\alpha}^{\otimes n}).
\end{align*}

Utilizing this map, we will (in poor but convenient form) often refer to the sections over $M$ whilst instead writing down the corresponding invariant section over $Z$. This will be done when the structures are simplest to express as sections over $Z$, but it should be understood that what is really meant is the sections over $M$.

Now we define the following structures:
\begin{itemize} 
\item \textit{Differential operator:} Letting $\nabla^L$ denote the induced connection on $L$ (induced from the connection $A$ on $Z$), take $D = \bigoplus_{n \in \mathbb{Z}} D_n$ where:
$$D_n = \nabla^{L^{ \otimes n}} - nA : \Gamma(M,L^{\otimes n})\to \Gamma(M,E_n). $$
This operator\footnote{Again, observe that we incorrectly expressed the map $D_n$ as a map to sections over $Z$, while what we implicitly mean is the map $\psi^{-1} \circ D_n$. This map extends to a map from $\Gamma(M,\mathcal{E}) \to \Gamma(M,\mathcal{E})$ via considering the map $\psi^{-1} \circ (d \otimes \mathbbm{1} + \mathbbm{1} \otimes D) \otimes \psi$.}
(along with the twisted operator $D+H$ where $H$ is an invariant closed 3-form), squares to zero. This is clear when the operator is expanded in local coordinates, say over $\pi^{-1}(U_{\alpha})$, as then: $(\nabla^{L^{ \otimes n}} - nA)^2 = (d - 2 \pi i n d\theta_{\alpha})^2 =0$.

\item{\textit{Exotic bracket:}} The bracket on sections of the weighted spaces $E_p$ for all $p \in \mathbb{Z}$ is the $H$-twisted, derived bracket of the above differential:
\begin{align*}
[ \, \cdot \, , \cdot \, ]_H : \Gamma(M,E_n) \times \Gamma(M,E_m) &\to \Gamma(M,E_{n+m})\\
(a , b) \, &\mapsto [a, b]_H := [ [d \otimes \mathbbm{1} + \mathbbm{1} \otimes D+H \otimes \mathbbm{1}, a ], b ],
\end{align*}
where we have once again implicitly utilized $\psi$ in our definition, and $n, m \in \mathbb{Z}$.

\item{\textit{Bilinear product:}} The $C^{\infty}(M)$-bilinear product on $E$ which, when restricted to the weight spaces $E_p$, is given by:
\begin{align*}
\langle \, \, \, , \, \, \rangle \, : \, \, \Gamma(M,E_n) \times \Gamma(M,E_m) & \to \quad  \Gamma\big(M,L^{\otimes (m+n)}\big)\\
\big( a \otimes s_1^{\otimes n} , b \otimes s_2^{\otimes m} \big) & \mapsto \langle a, b \rangle_{C.A} \big( s_1^{\otimes n} \otimes s_2^{\otimes m} \big),
\end{align*}
where $\langle \, \, , \, \rangle_{C.A}$ denotes the inner product on the Courant algebroid $E=TM \oplus \mathbbm{1}_{\mathbb{C}} \oplus T^*M \oplus \mathbbm{1}_{\mathbb{C}}$ over $M$ defined in Example \ref{ICA1}.

\item{\textit{Anchor map:}} The bundle map $\rho = \oplus_n \rho_n$, where
\begin{align*}
\rho_n : E_n  \to TM \otimes L^{\otimes n}
\end{align*}
is the bundle map defined by the relation:
$$\langle a_n , D h \rangle = \rho_n(a_n) h,$$
where $h \in \bigoplus_m \Gamma(M,L^{\otimes m})$ and $a_n \in \Gamma(M,E_n)$, for all $n, m \in \mathbb{Z}$.
\end{itemize}
\end{example}

\subsubsection{Exotic bracket via geometric means:}
For the exotic Courant algebroid defined in Example \ref{ex1}, we can also derive the $H$-twisted standard exotic bracket via geometric means by considering the exotic twisted Lie derivative along $\mathcal{U} = (X + \alpha)\otimes s \in \Gamma \left( (TM \oplus T^*M) \otimes L^{\otimes n} \right)$ given by $\mathcal{L}_{\mathcal{U}} = \bigoplus_{n \in \mathbb{Z}} \mathcal{L}_{\mathcal{U}}^{L^{\otimes n}}$, where
\begin{align*}
\mathcal{L}_{\mathcal{U}}^{L^{\otimes n}} = \left( \mathcal{L}_X^{L^{\otimes n}} -\mu_X^{L^{\otimes n}}+ d \alpha - \iota_v \alpha + \iota_{X} H \right) \otimes s + \alpha \wedge \nabla^{\otimes n} s +\nabla^{\otimes n} \circ \iota_X,
\end{align*}
such that 
$$ \mathcal{L}_{\mathcal{U}}^{L^{\otimes n}} : \Omega(M, L^{\otimes m}) \to \Omega(M, L^{\otimes (n+m)}),$$
$\mathcal{L}_X^{L^{\otimes n}}$ is the Lie derivative along the direction $X$ and $\mu_X^{L^{\otimes n}}$ is the moment of the induced connection $\nabla^{\otimes n} : \Omega^*(M, L^{\otimes n}) \to \Omega^{* + 1}(M, L^{\otimes n})$ along the direction $X$.

This map defines an extension of the exotic twisted Lie derivative defined in \cite{HM18}.

\begin{theorem}
Let $\mathcal{U}, \mathcal{V} \in \Gamma\left( M, \bigoplus_{n \in \mathbb{Z}}((TM \oplus T^*M) \otimes L^{\otimes n}) \right),$ and $\nabla := \bigoplus_{n \in \mathbb{Z}} \nabla^{\otimes n}$. Then on $\Omega^*(M, \bigoplus_{m \in \mathbb{Z}} L^{\otimes m})$, we have:
\begin{align*}
&\{\mathcal{U} , \mathcal{V} \} = 2 \langle \mathcal{U} , \mathcal{V} \rangle,\\
&\{ \nabla - \iota_v + H, \mathcal{U}\} = \mathcal{L}_{\mathcal{U}},\\
&[\nabla - \iota_v + H, \mathcal{L}_{\mathcal{U}}] = 0, \text{on } \Omega^*\Big(M , \, \bigoplus_{m \in \mathbb{Z}} L^{\otimes m}\Big)^{S^1},\\
&[\mathcal{L}_{\mathcal{U}}, \mathcal{V} ] = [\mathcal{U}, \mathcal{V}]_H,\\
& [\mathcal{L}_{\mathcal{U}}, \mathcal{L}_{\mathcal{V}}] = \mathcal{L}_{[\mathcal{U}, \mathcal{V}]_H}, \text{ on } \Omega^*\Big(M,\underset{n \in \mathbb{Z}}{{ \bigoplus}}  L^{\otimes m}\Big)^{S^1}.
\end{align*}
\end{theorem}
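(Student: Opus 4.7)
The plan is to view the five identities as relations among odd and even operators acting on $\Omega^*\bigl(M, \bigoplus_{m} L^{\otimes m}\bigr)$, where a section $\mathcal{U} = (X+\alpha)\otimes s \in \Gamma\bigl((TM\oplus T^*M)\otimes L^{\otimes n}\bigr)$ acts as the odd Clifford operator $\omega\otimes t \mapsto (\iota_X\omega + \alpha\wedge\omega)\otimes(s\otimes t)$, and $D := \nabla - \iota_v + H$ is the odd twisted operator implicit in Example \ref{ex1}. Under this viewpoint the five identities are the exotic, line-bundle-valued analogs of the Cartan calculus for $TM\oplus T^*M$ acting on the de Rham complex, so I would prove them in the listed order and bootstrap (3)--(5) off (1) and (2). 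Identity (1) is then the Clifford relation: the anticommutator of the operators for $\mathcal{U}$ and $\mathcal{V} = (Y+\beta)\otimes s'$ collapses via the standard identity $\iota_X(\beta\wedge\omega) + \beta\wedge\iota_X\omega = (\iota_X\beta)\,\omega$ and its $X\leftrightarrow Y$ counterpart to $(\iota_X\beta + \iota_Y\alpha)\otimes(s\otimes s')$, which is $2\langle\mathcal{U},\mathcal{V}\rangle$; the line-bundle factor is tracked trivially.

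For identity (2), I expand $\{D, \mathcal{U}\}$ into its six anticommutator pieces $\{\nabla, \iota_X\}$, $\{\nabla, \alpha\wedge\}$, $\{\iota_v, \iota_X\}$, $\{\iota_v, \alpha\wedge\}$, $\{H\wedge, \iota_X\}$, $\{H\wedge, \alpha\wedge\}$ (with appropriate signs) and match each against the definition of $\mathcal{L}_\mathcal{U}$: $\{\iota_v, \iota_X\}$ and $\{H\wedge, \alpha\wedge\}$ vanish for type reasons (interior products anticommute; $H$ is a $3$-form so $H\wedge\alpha = -\alpha\wedge H$), $\{H\wedge, \iota_X\}$ produces $\iota_X H\wedge$, $\{\iota_v, \alpha\wedge\}$ produces the $\iota_v\alpha$-multiplication, and the two $\nabla$-anticommutators yield the covariant Cartan formula $\{\nabla^{\otimes n}, \iota_X\} = \mathcal{L}_X^{L^{\otimes n}} - \mu_X^{L^{\otimes n}} + \nabla^{\otimes n}\!\circ\iota_X$ together with $\{\nabla^{\otimes n}, \alpha\wedge\} = d\alpha\wedge + \alpha\wedge\nabla^{\otimes n}$, recovering the trailing $\nabla^{\otimes n}\!\circ\iota_X$ and $\alpha\wedge\nabla^{\otimes n} s$ terms appearing in the displayed definition of $\mathcal{L}_\mathcal{U}$. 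This step is where I expect the main obstacle: the bookkeeping of connection, curvature-moment, and $H$-twist contributions is where the exotic setup genuinely differs from the classical one, and since every subsequent step relies on this formula, it must be gotten exactly right.

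Once (2) is established, identity (3) follows from graded Jacobi: $[D, \mathcal{L}_\mathcal{U}] = [D, \{D, \mathcal{U}\}] = [D^2, \mathcal{U}]$, so the claim reduces to $D^2 = 0$ on invariant sections. This is the exotic analog of $(d+H)^2 = 0$ and is handled by the local computation already sketched after Example \ref{ex1} (the identity $\nabla^{L^{\otimes n}} = d - 2\pi i n A$ in local coordinates, combined with $dH = 0$, $\iota_v^2 = 0$, and invariance of $H$, makes $D^2 = 0$ on $\mathcal{L}_v$-invariants). Identity (4) is essentially tautological, because the exotic Dorfman bracket in Example \ref{ex1} is defined as the iterated commutator $[[D, \mathcal{U}], \mathcal{V}]$, which equals $[\mathcal{L}_\mathcal{U}, \mathcal{V}]$ by (2). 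Identity (5) then follows from graded Jacobi applied to $[\mathcal{L}_\mathcal{U}, \mathcal{L}_\mathcal{V}] = [\mathcal{L}_\mathcal{U}, \{D, \mathcal{V}\}]$: the resulting two summands $\{[\mathcal{L}_\mathcal{U}, D], \mathcal{V}\}$ and $\{D, [\mathcal{L}_\mathcal{U}, \mathcal{V}]\}$ reduce, using (3) and (4), to $0$ and $\{D, [\mathcal{U}, \mathcal{V}]_H\} = \mathcal{L}_{[\mathcal{U},\mathcal{V}]_H}$ respectively, yielding the claimed equality on invariant forms.
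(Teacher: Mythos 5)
Your proposal is correct and is essentially the argument the paper intends: the paper gives no details, saying only that the proof ``follows the same lines as Theorem 1.2 in \cite{HM18}'', and that proof is exactly the derived-bracket/graded-Jacobi computation you describe (Clifford relation for (1), term-by-term Cartan expansion for (2), then bootstrapping (3)--(5) from $D^2=0$ on invariants and the graded Jacobi identity). One minor slip: as an operator identity $\{\nabla^{\otimes n},\alpha\wedge\}=d\alpha\wedge$ with no extra term; the $\alpha\wedge\nabla^{\otimes n}s$ contribution in $\mathcal{L}_{\mathcal{U}}$ arises because the Clifford operator carries the tensor factor $s$, which you do account for in substance.
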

The proof of this theorem follows the same lines as Theorem 1.2 in \cite{HM18}.

\subsubsection{Relation between the standard Courant algebroid and the exotic Courant algebroid:}

We now detail how the exotic Courant algebroid of Example \ref{ex1} is in fact equivalent to an exact Courant algebroid, such that the data defined on one of these objects can be transferred to the other, and vice versa.

To do this we will use the fact that every element $a \in \Gamma(Z,TZ \oplus T^*Z)$ can be expressed locally over $\pi^{-1}(U_{\alpha}) \cong U_{\alpha} \times S^1$ as a sum over its weight space $\Gamma_n(TZ\oplus T^*Z):= \{a \in \Gamma_n(Z,TZ\oplus T^*Z)| \mathcal{L}_v a = n a\}$,  using the family Fourier expansion:
$$a |_{U_{\alpha}}= \underset{n \in \mathbb{Z}}{\sum} e^{-2 \pi i n \theta_{\alpha}} (X_{n, \alpha} + f_{n, \alpha} v + \xi_{n, \alpha} + g_{n, \alpha} A) \in \underset{n \in \mathbb{Z}}{\bigoplus} \Gamma_{-n} \big( \pi^{-1}(U_{\alpha}), TZ \oplus T^*Z|_{\pi^{-1}(U_{\alpha})}\big),$$
where $X_{n, \alpha} \in \Gamma(U_{\alpha}, TU_{\alpha})$ is a horizontal vector field, $\xi_{n, \alpha} \in \Gamma(U_{\alpha}, T^*U_{\alpha})$ is a basic form, and $f_{n, \alpha}, g_{n, \alpha} \in C^{\infty}(U_{\alpha})$.\footnote{Recall that $X_{n, \alpha}$ should be written as $h_A^{-1}(X_{n, \alpha})$ where $h_A: M \to Z$ denotes the horizontal lift corresponding to the connection $A$, and $\xi_{n, \alpha}$ should be written as $\pi^*(\xi_{n, \alpha})$. Keeping with the literature however, we will take $X_{n, \alpha}$ to mean both the vector field on $TU_{\alpha}$ as well as its horizontal lift defined on $TZ|_{\pi^{-1}(U_{\alpha})}$, with which being clear from the context. Likewise for the basic forms.} 
\begin{proposition}
\label{pro}
The exotic Courant algebroid over $M$ defined in Example \ref{ex1} is equivalent to the H-twisted standard Courant algebroid over $Z$ detailed in Example \ref{stand}.
\end{proposition}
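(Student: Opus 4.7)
The plan is to identify $\Gamma(Z, TZ \oplus T^*Z)$ with $\bigoplus_{n \in \mathbb{Z}} \Gamma(M, E \otimes L^{\otimes n})$ by Fourier expansion along the circle fibre, and then to check that this identification intertwines all four pieces of Courant-algebroid structure (bracket, pairing, anchor, induced differential).

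First, I would construct the isomorphism itself. Using the family Fourier expansion stated just before the proposition, any $a \in \Gamma(Z, TZ\oplus T^*Z)$ decomposes locally into weight-$(-n)$ components
\[ a_n|_{\pi^{-1}(U_\alpha)} = e^{-2\pi i n\theta_\alpha}\bigl(X_{n,\alpha} + f_{n,\alpha} v + \xi_{n,\alpha} + g_{n,\alpha} A\bigr). \]
The local function $e^{-2\pi i n\theta_\alpha}$ is the trivializing section of $\pi^*(L^{\otimes n})$ dual to $s_\alpha^{\otimes n}$, so each $a_n$ is a globally defined invariant section of $(TZ\oplus T^*Z)\otimes \pi^*(L^{\otimes n})$, and $\psi^{-1}(a_n)$ is a well-defined element of $\Gamma(M, E\otimes L^{\otimes n})$. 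Taking the Fr\'echet completion of the direct sum, as in the conventions of Section~2.1, yields a continuous linear bijection
\[ \Psi \colon \Gamma(Z, TZ \oplus T^*Z) \longrightarrow \bigoplus_{n \in \mathbb{Z}} \Gamma(M, E \otimes L^{\otimes n}). \]

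Next, I would verify that the $H$-twisted de Rham differential $d+H$ on $\Omega^*(Z)$ corresponds, under an analogous Fourier decomposition of forms, to the exotic twisted differential $D+H$ of Example~\ref{ex1}. The key local identity is
\[ d\bigl(e^{-2\pi i n\theta_\alpha}\,\omega_\alpha\bigr) = e^{-2\pi i n\theta_\alpha}\bigl(\nabla^{L^{\otimes n}} - nA\bigr)(\omega_\alpha), \]
which follows from $d e^{-2\pi i n\theta_\alpha} = -2\pi i n\, e^{-2\pi i n\theta_\alpha}\,d\theta_\alpha$ together with the local expression $A|_\alpha = 2\pi i\,d\theta_\alpha + A^*_\alpha$. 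This is precisely the operator $D_n$, so $\Psi$ (extended to forms via the same Fourier device) intertwines $d+H$ with $D+H$. With this in hand, the remaining Courant-algebroid data transfer automatically: the Dorfman bracket on $\Gamma(Z, TZ\oplus T^*Z)$ is the derived bracket of $d+H$ acting via the Clifford action (footnote to Example~\ref{stand}), the exotic bracket in Example~\ref{ex1} is defined as the derived bracket of $D+H$, the symmetric pairing is recovered as the anticommutator of Clifford actions, and the anchor is determined by the pairing and the differential through $\langle Dh, a\rangle = \rho(a)h$. The axioms (1)--(5) therefore pass across $\Psi$ tautologically from $Z$ to the exotic side.

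The main obstacle is the book-keeping for the Clifford module structure in the exotic setting: I must verify that the Clifford action of a weight-$(-n)$ section of $TZ\oplus T^*Z$ on a weight-$(-m)$ form on $Z$ decomposes, under $\Psi$ and the Fourier expansion of forms used in Section~1.3, as the exotic Clifford action of $E\otimes L^{\otimes n}$ on $\Omega^*(M, L^{\otimes m})$ tensored with multiplication $s_\alpha^{\otimes n} \otimes s_\alpha^{\otimes m} \mapsto s_\alpha^{\otimes(n+m)}$ in $\mathcal{L}$. This reduces to tracking how interior product and wedge with $A$ redistribute the factor of $e^{-2\pi i n\theta_\alpha}$, and once checked it makes the equivalence of the two Courant-algebroid structures a direct consequence of the standard identities on $Z$.
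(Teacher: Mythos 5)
Your proposal is correct and takes essentially the same route as the paper: the map $\Psi$ you construct from the family Fourier expansion is exactly the paper's $\phi' = \psi^{-1}\circ\phi$, and the equivalence is established by checking that this bijection intertwines bracket, pairing and anchor. If anything, you supply more of the mechanism than the paper does, since your local conjugation identity $e^{2\pi i n\theta_\alpha}\, d \, e^{-2\pi i n\theta_\alpha} = \nabla^{L^{\otimes n}} - nA$ together with the derived-bracket description explains why the compatibility relations \eqref{eq1} hold, which the paper merely asserts.
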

\begin{proof}
Consider the map given by 
$$\phi = \bigoplus_{n \in \mathbb{Z}} \phi_n : \Gamma(Z,TZ \oplus T^*Z) \to \Gamma\big(Z,(TZ \oplus T^*Z) \otimes \pi^*(L^{\otimes n})\big)^{ S^1}$$
that, when decomposed into its weight spaces, defines the maps
\begin{align}
\label{phi}
\phi_n:  \Gamma_{-n}(Z,TZ \oplus T^*Z) \to \Gamma\big(Z,(TZ \oplus T^*Z) \otimes \pi^*(L^{\otimes n})\big)^{ S^1}.
\end{align}
which can locally be expressed over $\pi^{-1}(U_{\alpha})$ by the invertible map:
\begin{align*}
\phi_n(e^{-2 \pi i n \theta_{\alpha}} (X_{n, \alpha} + f_{n, \alpha} v + \xi_{n, \alpha} + g_{n, \alpha} A))|_{\pi^{-1}(U_{\alpha})} &=(X_{n, \alpha} + f_{n, \alpha} v + \xi_{n, \alpha}+ g_{n, \alpha}A)\otimes \pi^*(s_{\alpha}^{\otimes n}).
\end{align*}

Then, recalling the bijective correspondence between the invariant sections over $Z$ and sections over $M$ as defined in equation (\ref{psi}), given by the map:
\begin{align*}
\psi : \Gamma \big(M, (TM \oplus \mathbbm{1}_{\mathbb{C}} \oplus T^*M \oplus \mathbbm{1}_{\mathbb{C}}) \otimes L^{\otimes n} \big) &\to \Gamma \big(Z, (TZ \oplus T^*Z) \otimes \pi^*(L^{\otimes n}) \big)^{ S^1},
\end{align*}
we get a bijection $\phi ' = \psi^{-1} \circ \phi$ which satisfies the following conditions:
\begin{align} 
\label{eq1}
\begin{split}
[ \phi ' (a), \phi '(b) ]_H &= \phi ' ([a, b]_{H,C.A}),\\
\langle \phi '(a), \phi '(b) \rangle &= \phi ''(\langle a, b \rangle_{C.A}),\\
\rho( \phi '(a)) &= \phi ' \circ \rho(a),
\end{split}
\end{align}
where $a , b \in \Gamma(Z,TZ \oplus T^*Z)$, and the map $\phi '': C^{\infty}(Z) \overset{\cong}{\to} \oplus_n \Gamma(L^{\otimes n})$ is given locally by $\phi ''_{\alpha}(e^{-2 \pi i n \theta_{\alpha}})= s_{\alpha}^{\otimes n}$.

Thus, decomposing the Courant algebroid $TZ \oplus T^*Z$ over $Z$ into its weight spaces we can transfer all its data to that of the exotic Courant algebroid. Given this map is invertible, the result is proven.
\end{proof}

\begin{remark}
We will often refer to the above $H$-twisted, standard Courant algebroid $TZ \oplus T^*Z$ over $Z$ as having an exotic Courant algebroid structure, where by this, we mean the exotic Courant algebroid $\mathcal{E}$ over $M$ of which it is equivalent to.
\end{remark}

Keeping all notation the same, and once again letting $\hat{\pi} : \hat{Z} \to M$ denote the T-dual bundle to $Z$ with invariant T-dual H-flux representative $\hat{H} \in \Omega^3(\hat{Z})$, we come to our second example:
\begin{example}
\label{ex2}
Consider the infinite-dimensional bundle over $M$ given by the space
$$\hat{\mathcal{E}} = \bigoplus_{n \in \mathbb{Z}} \hat{E}_n := \bigoplus_{n \in \mathbb{Z}}  \big((T\hat{Z} \oplus T^*\hat{Z}) \otimes \hat{\pi}^*(L^{\otimes n})\big)^{\hat S^1} = \bigoplus_{n \in \mathbb{Z}}  \big( (TM \oplus \mathbbm{1}_{\mathbb{C}} \oplus T^*M \oplus \mathbbm{1}_{\mathbb{C}}) \otimes L^{\otimes n} \big),$$ 
and consider the following structures defined on it:
\begin{itemize}
\item \textit{Differential operator:} The differential operator given by $\hat{D}:= \oplus_n \hat{D}_n$ where:
$$\hat{D}_n = -(\hat{\nabla}^{L^{\otimes n}} - \iota_{n \hat v})  : \Gamma(M,L^{\otimes n}) \to \Gamma(M, \hat{E}_n).$$
The operator $-(\hat{\nabla}^{\otimes n} - \iota_{n \hat v} + \hat{H})$ squares to zero for the invariant closed 3-form $\hat H$, as detailed in Theorem 2.1 of \cite{HM18}.

\item{\textit{Exotic bracket:}} The $\hat H$-twisted derived bracket of the above differential, defined on the weight spaces by:
\begin{align*}
[ \, \cdot \, , \cdot \, ]_{\hat{H}} : \Gamma(M,\hat{E}_n) \times  \Gamma(M,\hat{E}_m) &\to  \Gamma(M,\hat{E}_{n+m})\\
(a , b) \, &\mapsto [a, b]_{\hat{H}} := [ [\mathbbm{1} \otimes \hat D + d \otimes \mathbbm{1} + \hat H \otimes \mathbbm{1}, a ], b ].
\end{align*}

\item{\textit{Bilinear product:}} The $C^{\infty}(M)$-bilinear product which, when restricted to the weight spaces $\hat{E}_n, \hat{E}_m$, is given by:
\begin{align*}
\langle \, \, \, , \, \, \rangle \, : \,  \, \Gamma(M,\hat{E}_n) \times  \Gamma\big(M,\hat{E}_m) & \to \quad   \Gamma(M,L^{\otimes (m+n)}\big)\\
\Big( a \otimes s_1 , b \otimes s_2 \Big) & \mapsto \langle a, b \rangle_{C.A} \big( s_1 \otimes s_2 \big),
\end{align*}
where $ \langle \cdot , \cdot  \rangle_{C.A}$ defines the inner product on sections of the Courant algebroid $TM \oplus \mathbbm{1}_{\mathbb{C}} \oplus T^*M \oplus \mathbbm{1}_{\mathbb{C}}$ over M defined in Example \ref{ICA1}.

\item{\textit{Anchor map:}} The bundle map $\hat{\rho} = \oplus_n \hat{\rho}_n$, where
$$\hat{\rho}_n : \hat{E}_n  \to TM \otimes L^{\otimes n},$$
is the bundle map defined by the relation
$$\langle a_n , \hat{D}h \rangle = \hat{\rho}_n(a_n) h,$$
where $h \in \bigoplus_m \Gamma(M,L^{\otimes m})$ and $a_n \in  \Gamma(M,\hat{E}_n)$.
\end{itemize}

This bundle, along with the above structures, defines an exotic Courant algebroid over $M$. Observe that although the bundle of this example is the same as that of Example \ref{ex1}, the exotic structures defined on the bundles are distinct, and thus the resulting exotic Courant algebroids are also distinct.
\end{example}

\section{T-duality for exotic Courant algebroids}

In this section, we prove the main T-duality result between exotic Courant algebroids in Theorem \ref{T-duality-exotic}, 
that generalizes the T-duality isomorphism 
in \cite{cavalcanti} mapping from the invariant Courant algebroid $(TZ\oplus T^*Z)^{S^1}$ over $M$, to an isomorphism from the exact Courant algebroid $(TZ\oplus T^*Z)$ over $Z$.
We also define a Clifford action of exotic Courant algebroids on exotic differential forms in Theorem \ref{action} and show that it is compatible with T-duality.
 Subsection \ref{trivial} illustrates this isomorphism in the special case of trivial circle bundles with trivial flux.
 
\subsection{T-duality isomorphism from the standard Courant algebroid.}
\hfill\\
Let $\tau = \bigoplus_n \tau_n$ where $\tau_n : \Omega^{\bar{k}}_{-n}(Z) \to \mathcal{A}_n^{\bar{k+1}}(\hat{Z})^{\hat{S}^1}$ denotes the exotic Hori formula as defined in \cite{HM18}, which extends the standard T-duality isomorphisms on invariant differential forms. 

Recall the exotic Courant algebroid types from Example \ref{ex1} and Example \ref{ex2} defined over $M$, each with underlying infinite-dimensional vector bundle given by: 
\begin{align*}
\mathcal{E}_M= \hat{\mathcal{E}}_M := \bigoplus_{n \in \mathbb{Z}} \mathcal{E}_{M,n} &= \bigoplus_{n \in \mathbb{Z}}  \big( (TM \oplus \mathbbm{1}_{\mathbb{C}} \oplus T^*M \oplus \mathbbm{1}_{\mathbb{C}}) \otimes L^{\otimes n} \big),
\end{align*}
where $\mathcal{E}_M$ and $\hat{\mathcal{E}}_M$ denote the same bundle, however to reduce confusion we denote the bundle associated to the exotic Courant algebroid related to the pair $(Z, H)$ from Example \ref{ex1} by $\mathcal{E}_M$, and the bundle associated to the exotic Courant algebroid related to the T-dual pair $(\hat{Z}, \hat{H})$ from Example \ref{ex2} by $\hat{\mathcal{E}}_M$.

Furthermore, sections of these exotic Courant algebroids from Example \ref{ex1} and Example \ref{ex2} can equivalently be viewed as invariant sections of the following infinite-dimensional bundles defined over $Z$ and $\hat{Z}$ respectively, given by:
\begin{align*}
\mathcal{E}_Z :=\bigoplus_{n \in \mathbb{Z}} \mathcal{E}_{Z, n} &= \bigoplus_{n \in \mathbb{Z}}  \big((TZ \oplus T^*Z) \otimes \pi^*(L^{\otimes n})\big),\\
\hat{\mathcal{E}}_{\hat{Z}} := \bigoplus_{n \in \mathbb{Z}} \hat{\mathcal{E}}_{\hat{Z},n} &= \bigoplus_{n \in \mathbb{Z}}  \big((T\hat{Z} \oplus T^*\hat{Z}) \otimes \hat{\pi}^*(L^{\otimes n})\big)
\end{align*}
That is, there exists a bijection $\psi^{-1} : \Gamma(Z, \mathcal{E}_Z)^{S^1} \to \Gamma(M, \mathcal{E}_M)$ as given in (\ref{psi}), and similarly a bijection $\hat{\psi}^{-1}: \Gamma(\hat Z, \hat{\mathcal{E}}_Z)^{S^1} \to \Gamma(M, \hat{\mathcal{E}}_M)$.

Defining the map $\varphi = \oplus_n \varphi_n$ where:
\begin{align*}
\varphi_n : \Gamma(M, \mathcal{E}_{M,n}) &\to \Gamma(M, \hat{\mathcal{E}}_{M,n})\\(X, f, \xi , g) \otimes s^{\otimes n} &\mapsto -(X, g, \xi , f) \otimes s^{\otimes n},
\end{align*}
and recalling the map $\phi ':\Gamma(Z,TZ \oplus T^*Z) \to  \Gamma(M, \mathcal{E}_M)$ from Proposition \ref{pro}, consider the composition:
\begin{align}
\label{mu}
\mu := \varphi \circ \phi '  : \Gamma(Z, TZ \oplus T^*Z) \to \Gamma(M, \hat{\mathcal{E}}_M),
\end{align}
where this map may be decomposed into the sum $\mu = \bigoplus_n \mu_n$ where locally:
\begin{align*}
\mu_n = \varphi_n \circ \phi'_{n} : \Gamma_{-n}(Z, TZ \oplus T^*Z) &\to \Gamma(M, \hat{\mathcal{E}}_{M,n}),\\
e^{-2 \pi i n \theta_{\alpha}} (X_{n, \alpha} + f_{n, \alpha} v + \xi_{n, \alpha} + g_{n, \alpha} A) &\mapsto (X_{n, \alpha}, f_{n, \alpha} v, \xi_{n, \alpha}, g_{n, \alpha}) \otimes s^{\otimes n}
\end{align*}

\begin{theorem}[T-duality of exotic Courant algebroids]\label{T-duality-exotic}
The map defined in equation \eqref{mu} above:
$$
\mu := \Gamma(Z, TZ \oplus T^*Z) \to \Gamma(M, \hat{\mathcal{E}}_M),
$$
is an isomorphism between the exotic Courant algebroid structures of each bundle.
Furthermore, it is equal to the complexified T-duality isomorphism of Cavalcanti-Gualtieri \cite{cavalcanti} when the domain is restricted to the invariant sections of the exact Courant algebroid.
\end{theorem}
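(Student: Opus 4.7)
The plan is to decompose $\mu = \varphi \circ \phi'$ and handle each factor separately. Proposition \ref{pro} already gives that $\phi'$ is an isomorphism between the $H$-twisted standard Courant algebroid $(TZ \oplus T^*Z, [\cdot,\cdot]_H)$ over $Z$ and the exotic Courant algebroid $(\mathcal{E}_M, [\cdot,\cdot]_H)$ over $M$ from Example \ref{ex1}. So the whole theorem reduces to proving that
\[
\varphi = \bigoplus_n \varphi_n : \Gamma(M,\mathcal{E}_M) \longrightarrow \Gamma(M,\hat{\mathcal{E}}_M),\qquad (X,f,\xi,g)\otimes s^{\otimes n} \longmapsto -(X,g,\xi,f)\otimes s^{\otimes n}
\]
is an isomorphism from the exotic Courant algebroid of Example \ref{ex1} to the exotic Courant algebroid of Example \ref{ex2}. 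I would verify this weight-by-weight, since $\varphi$ clearly acts as a $C^\infty(M)$-linear bundle isomorphism preserving weight $n$.

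First I would check the easy structures: the bilinear pairing and the anchor. For the pairing, plugging into the formula of Example \ref{ex1} (same underlying Courant algebroid $TM \oplus \mathbbm{1}_{\mathbb{C}} \oplus T^*M \oplus \mathbbm{1}_{\mathbb{C}}$ from Example \ref{ICA1}), one checks that $\langle (X,g,\xi,f),(Y,\tilde g,\eta,\tilde f)\rangle_{\mathrm{C.A}} = \langle (X,f,\xi,g),(Y,\tilde f,\eta,\tilde g)\rangle_{\mathrm{C.A}}$, so $\varphi$ preserves pairings up to the overall sign squared. For the anchor, both $\rho_n$ and $\hat\rho_n$ simply return the tangent part $X\otimes s^{\otimes n}$, so compatibility with anchors is immediate from the definition of $\varphi$.

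The main obstacle is the bracket. The cleanest way to handle it is through the Clifford action on exotic differential forms: sections of $\mathcal{E}_M$ and $\hat{\mathcal{E}}_M$ act by contraction-and-wedge on the exotic de Rham complexes, and the brackets $[\cdot,\cdot]_H$ and $[\cdot,\cdot]_{\hat H}$ are the derived brackets of the differentials $D+H = \nabla^{L^{\otimes n}} - nA + H$ and $\hat D + \hat H = -(\hat\nabla^{L^{\otimes n}} - \iota_{n\hat v}) + \hat H$ respectively (Examples \ref{ex1}, \ref{ex2}). I would first prove the intertwining identity
\[
\tau(a \cdot \omega) = \mu(a) \cdot \tau(\omega),\qquad a \in \Gamma(Z,TZ\oplus T^*Z),\ \omega \in \Omega^*(Z),
\]
by expanding $a$ and $\omega$ into their Fourier weight components over a trivializing neighborhood $\pi^{-1}(U_\alpha)$ and using the explicit exotic Hori formula of \cite{HM18} with $\tau_n$ extending the Cavalcanti-Gualtieri calculation. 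Once this Clifford intertwining holds, the fact (already recorded in \cite{HM18}) that $\tau$ intertwines $d+H$ with $-(\hat\nabla^{L^{\otimes n}} - \iota_{n\hat v} + \hat H)$ weight-by-weight implies
\[
\tau([a,b]_H \cdot \omega) = [\mu(a),\mu(b)]_{\hat H} \cdot \tau(\omega),
\]
and since the Clifford action of $\hat{\mathcal{E}}_M$ on exotic forms is faithful, the bracket compatibility $\mu([a,b]_H) = [\mu(a),\mu(b)]_{\hat H}$ follows. Injectivity and surjectivity of $\mu$ hold because both $\phi'$ and $\varphi$ are bijections on the underlying bundles.

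Finally, for the second claim of the theorem, I would restrict to the weight zero subspace $\Gamma(Z,TZ\oplus T^*Z)^{S^1} \subset \Gamma(Z,TZ\oplus T^*Z)$. There the Fourier expansion collapses to the $n=0$ term, $\phi'_0$ becomes the map $\phi^{-1}$ of \eqref{invsec} identifying invariant sections with sections of $TM\oplus\mathbbm{1}_{\mathbb{C}}\oplus T^*M\oplus\mathbbm{1}_{\mathbb{C}}$, and $\varphi_0$ swaps the two line factors $f \leftrightarrow g$ (up to the overall sign that, as in \cite{HM18}, reflects the $T\circ T = -\mathrm{Id}$ convention). Composing with $\hat\psi_0$ to express the image as an invariant section of $T\hat Z \oplus T^*\hat Z$ recovers exactly the Cavalcanti-Gualtieri map $\chi : X+fv+\xi+gA \mapsto X+g\hat v + \xi + f\hat A$ of Theorem \ref{t-duality-courant} (after complexification), completing the proof.
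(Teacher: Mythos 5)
Your proposal is correct and its skeleton matches the paper's proof exactly: decompose $\mu = \varphi\circ\phi'$, invoke Proposition \ref{pro} for $\phi'$, exhibit $\varphi^{-1}$ for invertibility, verify the three compatibility relations for the pairing, anchor and bracket, and finally restrict to weight zero to recover the Cavalcanti--Gualtieri map. Where you genuinely diverge is in how the bracket compatibility is established. The paper simply asserts that $[\varphi(a),\varphi(b)]_{\hat H}=\varphi([a,b]_H)$ ``can be proven explicitly by simply showing'' the identities, i.e.\ it leaves a direct local computation to the reader. You instead derive it from the Clifford-module intertwining $\tau(a\cdot\omega)=\mu(a)\cdot\tau(\omega)$ together with the derived-bracket descriptions of $[\cdot,\cdot]_H$ and $[\cdot,\cdot]_{\hat H}$ and faithfulness of the Clifford action. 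This is more conceptual and arguably supplies more of an actual argument than the paper does; note that the intertwining identity is exactly the paper's Theorem \ref{action}, which is proved \emph{after} Theorem \ref{T-duality-exotic} but by a local computation that uses only the formula for $\mu$, not its bracket compatibility, so there is no circularity --- though to keep your write-up self-contained you would need to front-load that computation. The one place to be careful is the sign bookkeeping: $\tau$ intertwines $d+H$ with $-(\hat\pi^*\nabla^{L^{\otimes n}}-\iota_{n\hat v}+\hat H)$, the bracket of Example \ref{ex2} is the derived bracket of $\hat D+\hat H$ with $\hat D_n=-(\hat\nabla^{L^{\otimes n}}-\iota_{n\hat v})$, and the overall minus sign built into $\varphi$ is precisely what reconciles these two conventions (as the paper notes in its $n=0$ discussion); your argument goes through, but you should make that sign cancellation explicit rather than implicit in ``the brackets are the derived brackets of the respective differentials.'' Your treatment of the pairing, the anchor, and the $n=0$ restriction agrees with the paper.
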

\begin{proof}
It was shown in Proposition \ref{pro} that the map $\phi ':\Gamma(Z,TZ \oplus T^*Z) \to  \Gamma(M, \mathcal{E}_M)$ defines an equivalence between the Courant algebroid structure on $TZ\oplus T^*Z $ and the exotic Courant algebroid structure on $\mathcal{E}_{M}$. What we will now show is that $\mu \circ \phi'^{-1} = \varphi$ defines an exotic Courant algebroid isomorphism from $\mathcal{E}_M$ to $\hat{\mathcal{E}}_M$.

First, it is clear that $\varphi$ is invertible, by simply considering the map:
\begin{align*}
\varphi^{-1}_n \, : \, \, \, \,  \Gamma(M, \hat{\mathcal{E}}_{M, n}) \, \, \, \,  &\to \, \, \, \, \Gamma(M, \mathcal{E}_{M, n})\\
(X_{n, \alpha}, f_{n, \alpha}, \xi_{n, \alpha}, g_{n, \alpha})\otimes s_{\alpha}^{\otimes n} &\mapsto -(X_{n, \alpha}, g_{n, \alpha}, \xi_{n, \alpha}, f_{n, \alpha})\otimes s_{\alpha}^{\otimes n}.
\end{align*}

Secondly, $\varphi$ transfers the exotic Courant algebroid structure bijectively, as can be proven explicitly by simply showing:
\begin{align}
\begin{split}
[ \varphi(a), \varphi(b) ]_{\hat{H}} &= \varphi([a, b]_{H}),\\
\langle \varphi(a), \varphi(b) \rangle &= \phi ''(\langle a, b \rangle),\\
\hat{\rho}(\varphi(a)) &= \varphi \circ \rho(a).
\end{split}
\end{align}
Therefore we see that $\varphi$ does indeed define an isomorphism between exotic Courant algebroid. By now composing $\varphi$ and $\phi'$, we get that the map $\mu$ defines an isomorphism between the $H$-twisted standard Courant algebroid over $Z$ and the exotic Courant algebroid corresponding to the dual data $(\hat{Z}, \hat{H})$ detailed in Example \ref{ex2}.

Furthermore, if we restrict to the case of when $n=0$, then we get that all the sections are trivial, $s_{\alpha}=1$ for all $\alpha$, and thus:
\begin{align*}
\mu_0 : \Gamma(Z,TZ \oplus T^*Z)^{S^1} &\to \, \, \Gamma(Z,T\hat{Z} \oplus T^*\hat{Z})^{\hat{S}^1}\\
X + fv + \xi +gA  \, \, \, &\mapsto -(X + g v + \xi + fA),
\end{align*}
which is the T-duality map of Cavalcanti-Gualtieri from Theorem \cite{cavalcanti} (the negative sign arising due to one considering the Courant algebroid with derived bracket of the differential operator $-(d + \hat{H})$ instead  of $d+\hat{H}$).

\end{proof}

Now we are going to show that there is an action of the exotic Courant algebroid on the exotic differential forms such that the extended T-duality isomorphism of \cite{HM18} acts as a module isomorphism. That is, given any $a \in \Gamma(Z,TZ \oplus T^*Z)$, $\omega \in \Omega^*(Z)$,
$$ \tau(a \cdot \omega) = \mu(a) \cdot \tau(\omega).$$

There is a group action of the invariant sections $\Gamma\big(\hat Z , \hat{\mathcal{E}}_{\hat{Z}}\big)^{S^1}$ (and thus, also $\Gamma(M, \hat{\mathcal{E}}_{M})$) on the exotic differential forms $\mathcal{A}^*(\hat{Z})$, given by
$$\left( (\hat X + \hat \alpha) \otimes s_1 \right) \cdot (\hat \omega \otimes s_2) := \left(\iota_{\hat X} \hat \omega + \hat \alpha \wedge \hat \omega \right) \otimes s_1 \otimes s_2.$$
Therefore, we get that
\begin{align*}
\left((\hat X + \hat \alpha) \otimes s_1 \right)^2 \cdot (\hat \omega \otimes s_2) &:= \left((\hat X + \hat \alpha) \otimes s_1 \right) \cdot \left( (\hat X + \hat \alpha) \otimes s_1 \right) \cdot (\hat \omega \otimes s_2)\\
&= \Big{\langle} (\hat X + \hat \alpha)\otimes s_1 \, , (\hat X + \hat \alpha) \otimes s_1 \Big{\rangle} \, \hat \omega \otimes s_2,
\end{align*}
where the product denotes the exotic product defined in Example \ref{ex2}.

As a result, take the algebra to be generated by the set $\Gamma\big(M, \hat{\mathcal{E}}_M\big)$ satisfying the relation
$$\big((\hat{X} + \hat{\xi}) \otimes s_1 \big)^2 = \big{\langle} ( \hat{X}+\hat{\xi} ) \otimes s_1 , (\hat{X}+ \hat{\xi}) \otimes s_1 \big{\rangle}.$$

\begin{theorem}[Clifford action of exotic Courant algebroids on exotic differential forms]\label{action}
The algebra of sections $\Gamma(M, \hat{\mathcal{E}}_M) \cong \Gamma(\hat Z ,\hat{\mathcal{E}}_{\hat{Z}})^{S^1}$ defines an action on the set of exotic differential forms, $\mathcal{A}^{\bar{k+1}}(\hat{Z})^{\hat{S}^1}$, such that the extended T-duality isomorphism mapping
$$\tau :\Omega^{\bar{k}}(Z) \to \mathcal{A}^{\bar{k+1}}(\hat{Z})^{\hat{S}^1}$$ 
defines a module isomorphism.
\end{theorem}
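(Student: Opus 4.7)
The proof splits into two tasks: first, verifying that the given formula really does define a Clifford action of the algebra generated by $\Gamma(M, \hat{\mathcal{E}}_M)$ on $\mathcal{A}^{\bar{k+1}}(\hat Z)^{\hat S^1}$; second, establishing that $\tau$ intertwines this action with the natural Clifford action on $\Omega^{\bar k}(Z)$. For the first task, I would take a generator $a = (\hat X + \hat\alpha)\otimes s_1$ and a form $\hat\omega \otimes s_2$, and directly expand $a \cdot a \cdot (\hat\omega \otimes s_2)$. Using $\iota_{\hat X}^2 = 0$, $\hat\alpha \wedge \hat\alpha = 0$, and the Leibniz rule $\iota_{\hat X}(\hat\alpha \wedge \hat\omega) = \hat\alpha(\hat X)\hat\omega - \hat\alpha \wedge \iota_{\hat X}\hat\omega$, the four cross-terms collapse to $\hat\alpha(\hat X)\hat\omega \otimes s_1^{\otimes 2} \otimes s_2$, which matches $\langle a, a\rangle(\hat\omega \otimes s_2)$ in the pairing defined in Example \ref{ICA1}. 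Bilinearity and polarization then extend the relation to arbitrary generators and confirm that the formula factors through the exotic Clifford relation displayed immediately before the theorem.

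For the intertwining property, my plan is to reduce to the invariant Cavalcanti--Gualtieri setting of Theorem \ref{cg} via family Fourier decomposition. Any $a \in \Gamma(Z, TZ\oplus T^*Z)$ and any $\omega \in \Omega^*(Z)$ decompose as (Fr\'echet-convergent) sums of weight components $a_n \in \Gamma_{-n}(Z, TZ\oplus T^*Z)$ and $\omega_m \in \Omega^*_{-m}(Z)$. Both the Clifford action and the map $\tau$ respect the weight grading (with $a_n \cdot \omega_m$ of weight $-(n+m)$), so by continuity and bilinearity it suffices to establish the identity
\[
\tau_{n+m}(a_n \cdot \omega_m) = \mu_n(a_n) \cdot \tau_m(\omega_m)
\]
for each pair $(n,m)$. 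Over a local trivialization $\pi^{-1}(U_\alpha)$, write $a_n = e^{-2\pi i n\theta_\alpha}\tilde a_n$ and $\omega_m = e^{-2\pi i m \theta_\alpha}\tilde\omega_m$ with $\tilde a_n,\tilde \omega_m$ invariant. By the explicit formula for $\mu_n$ recorded before the theorem, and by construction of the exotic Hori transform $\tau_n$ from \cite{HM18}, each of these maps acts as ``replace the exponential factor $e^{-2\pi ik\theta_\alpha}$ by the corresponding power of the local trivialization $s_\alpha^{\otimes k}$ and then apply the invariant Cavalcanti--Gualtieri map $\chi$ or $T$ to the invariant piece.'' Plugging these local expressions into the desired identity reduces it, after cancelling the common $s_\alpha^{\otimes(n+m)}$ factor produced by $s_\alpha^{\otimes n}\otimes s_\alpha^{\otimes m}$ on the right and by $\tau_{n+m}$ on the left, to the identity $T(\tilde a_n \cdot \tilde \omega_m) = \chi(\tilde a_n) \cdot T(\tilde\omega_m)$, which is precisely the Clifford intertwining statement of Theorem \ref{cg}.

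The principal obstacle is the bookkeeping of weights and local trivializations: one must verify that the extra tensor factors $s_\alpha^{\otimes n}$ produced by $\mu_n$ and by $\tau_m$ assemble correctly under the Clifford action into the $s_\alpha^{\otimes(n+m)}$ factor produced by $\tau_{n+m}$, and that the local identifications patch consistently across overlaps $U_\alpha \cap U_\beta$ (where the transition functions of $L$ multiply the sections in compatible ways on both sides). This compatibility is essentially built into the isomorphism $\phi'$ of Proposition \ref{pro} and its analogue on the dual side; once it is recorded, the algebraic core of the computation is exactly the invariant Clifford intertwining already established by Cavalcanti and Gualtieri, and the module isomorphism property is immediate from the fact (already noted in Theorem \ref{T-duality-exotic}) that $\mu$ is a bijection.
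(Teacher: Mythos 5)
Your overall strategy is sound and your first step (verifying the exotic Clifford relation by expanding $a\cdot a$ and using $\iota_{\hat X}^2=0$, $\hat\alpha\wedge\hat\alpha=0$ and the contraction identity) matches what the paper does in the discussion immediately preceding the theorem. For the intertwining identity, however, you take a genuinely different route from the paper: you decompose into weight components and then try to quote Theorem \ref{cg} for the invariant core, whereas the paper simply writes $a$ and $\omega$ in local components $X+fv+\xi+gA$ and $\omega_1+\omega_0\wedge A$, computes $a\cdot\omega$, $\tau(a\cdot\omega)$, $\tau(\omega)$ and $\mu(a)$ explicitly, and checks $\tau(a\cdot\omega)=\mu(a)\cdot\tau(\omega)$ directly, in effect re-deriving the Cavalcanti--Gualtieri identity with the weight factors $s_\alpha^{\otimes n}$ carried along. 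Your factorization is conceptually cleaner and correctly handles the weights (an element of $\Gamma_{-n}$ acting on $\Omega^*_{-m}$ lands in weight $-(n+m)$, matching $\tau_{n+m}$ on one side and $s_\alpha^{\otimes n}\otimes s_\alpha^{\otimes m}$ on the other).

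The one point you must repair is the sign bookkeeping, which you do not flag: as written, your reduction does \emph{not} land ``precisely'' on the statement $T(\tilde a\cdot\tilde\omega)=\chi(\tilde a)\cdot T(\tilde\omega)$ of Theorem \ref{cg}. In this paper $\mu_0=-\chi$ (the minus sign is noted explicitly in the proof of Theorem \ref{T-duality-exotic}), and the exotic Hori map $\tau$ used here carries a degree-dependent sign $(-1)^{k-1}$ coming from the fibre-integration convention, relative to the $T$ appearing in Theorem \ref{cg}. These two discrepancies cancel --- the degree of $a\cdot\omega$ is $\overline{k+1}$, so $\tau(a\cdot\omega)$ picks up $(-1)^{k}$ while $\mu(a)\cdot\tau(\omega)$ picks up $(-1)\cdot(-1)^{k-1}$ --- but if you cancel the trivialization factors and cite Theorem \ref{cg} verbatim you obtain an identity that differs from the desired one by an overall sign. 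You therefore need either to track the two conventions explicitly and show they compensate, or to fall back on the paper's direct component computation for the invariant core. With that caveat addressed, the rest of your argument (patching over overlaps via $\phi'$, continuity in the Fr\'echet completion, and deducing the module isomorphism from bijectivity of $\mu$ and $\tau$) is in order.
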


\begin{proof}
Let $a \in \Gamma_{-n}(Z, TZ \oplus T^*Z)$ and $\omega \in \Omega_{-m}^{\bar{k}}(Z)$, such that locally, over $\pi^{-1}(U_{\alpha})$, 
\begin{align*}
a &=e^{-2 \pi in \theta_{\alpha}}(X_{-n,\alpha} + f_{-n, \alpha}v + \xi_{-n, \alpha}+ g_{-n, \alpha} A)\\
\omega &= e^{-2 \pi i m \theta_{\alpha}}(\omega_{-m, \alpha, 1} + \omega_{-m, \alpha, 0} A).
\end{align*}
Then acting $a$ on $\omega$, we get 
\begin{align*}
a \cdot \omega |_{\pi^{-1}(U_{\alpha})} &=e^{-2 \pi i (m+n) \theta_{\alpha}} \big(\iota_{X_{-n}} \omega_{-m,1} + (-1)^{k-1}f_{-n} \omega_{-m,0} + \xi_{-n} \wedge \omega_{-m,1}\\
& \qquad \qquad \qquad \, \, \, + (\iota_{X_{-n}} \omega_{-m,0}  +  \xi_{-n} \wedge \omega_{-m,0} + (-1)^k g_{-n} \omega_{-m, 1}) \wedge A \big),
\end{align*}
where the $\alpha$ subscript has been removed in order to simplify notation. 

Under the extended T-duality isomorphism, $a \cdot \omega$ gets mapped locally to
\begin{align*}
\tau(a \cdot \omega) &=\int^{S^1,n+m} (a_{-n} \cdot \omega_{-m}) \wedge e^{\hat{A} \wedge A} \Big|_{\pi^{-1}(U_{\alpha})}\\
&= (-1)^{k}\Big(\iota_{X_{-n}} \omega_{-m,0}  +  \xi_{-n} \wedge \omega_{-m,0} + (-1)^k g_{-n} \omega_{-m, 1}+\\
& \qquad \qquad \quad+ \left(\iota_{X_{-n}} \omega_{-m,1} + (-1)^{k-1}f_{-n} \omega_{-m,0} + \xi_{-n} \wedge \omega_{-m,1} \right) \wedge \hat{A} \Big) \otimes \hat{\pi}^*(s_{\alpha}^{\otimes (n+m)}).
\end{align*}
Furthermore, under T-duality, 
\begin{align*}
\tau(\omega) &= (-1)^{k-1}\big(\omega_{-m, \alpha, 0} + \omega_{-m, \alpha, 1} \wedge \hat{A} \big) \otimes \pi^*(s_{\alpha}^{\otimes m}),\\
\mu(a) &= -\left( X_{-n,\alpha} + g_{-n, \alpha}v + \xi_{-n, \alpha}+ f_{-n, \alpha} A \right) \otimes \pi^*(s_{\alpha}^{\otimes n}).
\end{align*}
Therefore, acting $\mu(a)$ on $\tau(\omega)$, we find
$$\mu(a) \cdot \tau(\omega) = \tau(a \cdot \omega).$$

Furthermore, taking the inverse extended T-duality mapping $\sigma = -\tau^{-1}$ defined in \cite{HM18}, we get that $\bar{\mu}^{-1}$ defines the relevant module isomorphism for this inverse.
\end{proof}

\subsection{The case of trivial circle bundles}\label{trivial}
Consider now T-duality for the case of the trivial circle bundles. That is, 
$$Z=M\times S^1, \ \ \hat Z=M\times \hat S^1$$ and $H, \hat H$ are equal to $0$ and the connections are all the trivial ones. 

Let $\theta$ be the coordinate function of the circle $S^1$ and $\hat{\theta}$ be the coordinate function of the T-dual circle $\hat S^1$. The connections on each bundle will thus be $d\theta$ and $d\hat \theta$ respectively (up to constant).

To see how T-duality acts on the weighted sections of the standard Courant algebroid over $Z$, take $(x + \alpha)_{-n} \in \Gamma_{-n}(Z,TZ\oplus T^*Z)$. Then globally, $(x + \alpha)_{-n}$ is of the form
$$e^{-2\pi in\theta}(X_{-n} + f_{-n} v + \xi_{-n}+ g_{-n} d\theta),$$ 
where $X_{-n} \in \Gamma(TM)$ is a vector field on $M$, $\xi_{-n} \in \Gamma(T^*M)$ is a 1-form on $M$, $f_{-n}, g_{-n} 
\in C^{\infty}(M)$ and $v= \frac{\partial}{\partial\theta}$ defines the fundamental vector field of the bundle $Z$. 

Then given the dual fundamental vector field, $\hat v= \frac{\partial}{\partial\hat\theta}$,
the section $(x+\alpha)_{-n}$ will be mapped under T-duality as follows: 
\be \label{tautriv}
\mu_n((x+\alpha)_{-n})\,
 = -e^{2\pi in\hat\theta} (X_{-n} + g_{-n}\hat{v} +\xi_{-n}+f_{-n}d \hat\theta) .
\ee
So we see that
$$ \mu_n(x_{-n}+\alpha_{-n}) \in  \Gamma_{n}(T{\hat Z}\oplus T^*{\hat Z}).\\$$

On the other hand, if $(\hat y + \hat \beta)_{-n}\in \Gamma_{-n}(T{\hat Z}\oplus T^*{\hat Z})$, then $(\hat y +\hat\beta)_{-n}$ is globally equal to $$e^{-2\pi in\hat\theta} (\hat Y_{-n} + \hat{f}_{-n}  \hat{v} +\hat \eta_{-n}+\hat{g}_{-n} d\hat\theta ), $$
where $\hat{Y}_{-n} \in \Gamma(TM)$ is a vector field on $M$, $\hat{\eta}_{-n} \in \Gamma(T^*M)$ is a  1-form on $M$, and $\hat{f}_{-n}, g_{-n} \in C^{\infty}(M)$.

Now considering the map  $\mu^{-1} : \Gamma(T{\hat Z}\oplus T^*{\hat Z}) \to \Gamma(TZ \oplus T^*Z)$, we find that 
$$\mu^{-1}_{-n}((\hat y+\hat\beta)_{-n})=- e^{2\pi in \theta}(\hat Y_{-n} + \hat{g}_{-n} v +\hat \eta_{-n}+\hat{f}_{-n}  d\theta). $$
And so evidently:  $ \hat{\mu}^{-1}_{-n}((\hat y +\hat\beta)_{-n})\in  \Gamma_{n}(TZ\oplus T^*Z). $

\end{document}